\newcommand{\Tr}{\operatorname{Tr}}
\newcommand{\openone}{\mathds{1}}
\newtheorem{thm}{Theorem}
\newtheorem{lmm}{Lemma}
\DeclarePairedDelimiter{\floor}{\lfloor}{\rfloor}
\begin{document}

\title{Tradeoff relations between accessible information, informational power, and purity}

\author{Michele        Dall'Arno        and        Francesco
  Buscemi\thanks{M. Dall'Arno is with the Centre for Quantum
    Technologies,  National   University  of   Singapore,  3
    Science   Drive    2,   117543,    Singapore   (e--mail:
    cqtmda@nus.edu.sg).}\thanks{F.   Buscemi   is  with  the
    Graduate  School  of   Informatics,  Nagoya  University,
    Chikusa-ku,    464-8601,    Nagoya,   Japan    (e--mail:
    buscemi@i.nagoya-u.ac.jp).}\thanks{Copyright   (c)  2017
    IEEE.  Personal  use  of  this  material  is  permitted.
    However, permission  to use this material  for any other
    purposes must  be obtained  from the  IEEE by  sending a
    request to pubs-permissions@ieee.org.}}

\maketitle

\begin{abstract}
  The  accessible information  and  the informational  power
  quantify  the maximum  amount of  information that  can be
  extracted  from  a  quantum  ensemble  and  by  a  quantum
  measurement,  respectively.   Here,   we  investigate  the
  tradeoff between the accessible information (informational
  power, respectively) and  the purity of the  states of the
  ensemble (the elements  of the measurement, respectively).
  Under any  given {\em  lower} bound on  the purity,  i) we
  compute the  minimum informational power and  show that it
  is  attained  by   the  depolarized  uniformly-distributed
  measurement; ii) we  give a lower bound  on the accessible
  information.   Under any  given {\em  upper} bound  on the
  purity, i)  we compute the maximum  accessible information
  and show  that it is  attained by an ensemble  of pairwise
  commuting  states  with  at  most  two  distinct  non-null
  eigenvalues;  ii) we  give a  lower bound  on the  maximum
  informational  power. The  present results  provide, as  a
  corollary, novel  sufficient conditions for  the tightness
  of the  Jozsa-Robb-Wootters lower bound to  the accessible
  information.
\end{abstract}

\begin{IEEEkeywords}
  Accessible information, informational power, purity.
\end{IEEEkeywords}

\IEEEPARstart{W}{e}  address  the problem  of  communicating
classical  information   over  the  most   general  physical
channel, that is a quantum channel (classical channels being
a particular instance of  the quantum case).  In particular,
we  consider the  case in  which  the sender  is allowed  to
encode a classical random variable  $X$ on a quantum system,
which is then  transmitted to a receiver  and measured, thus
producing  an output  classical  random  variable $Y$.   The
encoding here  produces an  ensemble of quantum  states, one
for each  letter in the input  alphabet $\mathcal{X}=\{x\}$,
whereas  the  measurement returns  a  letter  in the  output
alphabet $\mathcal{Y}=\{y\}$.

When the input ensemble is  fixed, the final measurement can
be  optimized  to  achieve  the  maximum  amount  of  mutual
input-output information $I(X;Y)$.  This quantity is defined
as    the    {\em    accessible    information}    of    the
ensemble~\cite{Wil13}.   By  direct   analogy,  the  maximum
amount of  input-output information that can  be established
for  a fixed  measurement, by  optimizing over  all possible
input ensembles, is defined as the {\em informational power}
of                                                       the
measurement~\cite{DDS11,Hol12,SS14,DBO14,Szy14,Dal14,Dal15,SS16,BDS16}. A
duality  relation  between these  two  information-theoretic
measures was established  in Ref.~\cite{DDS11}.  Within this
context,  one  is  generally   interested  in  bounding  the
accessible information and the  informational power that can
be  achieved given  some  resources, for  example for  fixed
Hilbert space dimension.

A  family  of quantum  states  or  measurement operators  is
called  ``pure''  if all  its  elements  are represented  by
rank-one    operators.     Mathematically   speaking,    the
\textit{purity} of a positive  semi-definite operator $X$ is
given  by   $P(X)=\Tr[X^2]/(\Tr[X])^2$.   Intuitively,  this
number  is  usually  considered  as a  good  proxy  for  the
``classical  uncertainty''  contained in  a  state  or in  a
measurement: the  higher the purity, the  less the classical
uncertainty. As it can be  readily shown, the purity reaches
its maximum ($P=1$) on  rank-one operators.  Our main result
is to derive analytical bounds on the accessible information
and the  informational power that  consider the purity  as a
free variable in  the problem. In this sense,  purity can be
considered as a resource, only available in limited amounts.

As  an  example,  let  us   consider  lower  bounds  on  the
accessible     information~\cite{JRW94,FC94}     and     the
informational   power~\cite{DBO14}.   These  are   typically
expressed   in    terms   of   a   quantity    called   {\em
  subentropy}~\cite{JRW94}  (see  Ref.~\cite{DDJB14}  for  a
study of its properties).  In  this sense, the subentropy of
a  given  state  $\rho$ quantifies  the  minimum  accessible
information of  any ensemble of {\em  pure} states averaging
to  $\rho$. Hence,  known  subentropy-like  lower bounds  on
informational  measures hold  only  if  the optimization  is
restricted to pure states and measurement elements.  In this
paper  we  generalize  similar  lower and  upper  bounds  by
investigating   tradeoff    relations   between   accessible
information, informational power, and  purity, which can now
be bounded by any given value $0 \le P \le 1$.

More  concretely, our  contribution is  two-fold. First,  we
consider the case in which an arbitrary {\em lower} bound on
the purity  is given.  In  this case, we derive  the minimum
informational power of any measurement when its elements are
subject  to such  a purity  constraint. We  show that  it is
attained  by the  ``depolarized Scrooge  measurement,'' that
is,   the  uniformly   depolarized,  uniformly   distributed
measurement.  We also derive a lower bound on the accessible
information.   This result  has  important connections  with
previous literature.   It proves a conjecture  formulated in
Ref.~\cite{BDS16}, where the  accessible information and the
informational  power  of   mixed  $t$-design  ensembles  and
measurements -- including  depolarized Scrooge structures --
were studied.  In the  process, our result corrects Eq.~A24,
Property~11, of Ref.~\cite{JM15}.

The  second set  of results  concerns the  case in  which an
arbitrary \textit{upper}  bound on  the purity  is enforced.
Under  this assumption,  we  derive  the maximum  accessible
information of any  ensemble when its states  are subject to
such a purity constraint. We prove  that it is attained by a
particular class  of ensembles  with commuting  states, each
with   at   most   two   different   non-null   eigenvalues.
Additionally,  we  derive  a  lower  bound  on  the  maximum
informational   power.   This   result  too   has  important
connections  with  previous  literature.  It  allows  us  to
simplify  a  proof,  given in  Ref.~\cite{HT01}  adopting  a
topological  approach,   of  the  tradeoff   between  purity
(therein  referred  to as  the  index  of coincidence  of  a
classical  probability  distribution) and  Shannon  entropy.
Moreover, our  formulation can be extended  to encompass the
case of arbitrary R\'enyi entropy, not only Shannon's.

Our  findings  have  implications  for the  problem  of  the
tightness  of the  Jozsa-Robb-Wootters  lower  bound on  the
accessible    information,     given    in     Eq.(33)    of
Ref.~\cite{JRW94}.  Prior  to this work, not  much was known
about  this  problem,  except  for the  cases  of  uniformly
distributed pure states  (Scrooge ensemble).  This contrasts
with  the  case of  the  Holevo  upper bound  on  accessible
information,  for  which  general necessary  and  sufficient
conditions for tightness  are known~\cite{Hol73, Rus02}.  As
a  consequence   of  our   results,  it  follows   that  the
Jozsa-Robb-Wootters lower bound is  also tight for uniformly
\emph{depolarized} (thus, not pure) Scrooge ensembles.

\section{Main results}

We  consider a  quantum  system associated  with a  (finite)
$n$-dimensional Hilbert  space $\mathcal{H}$, and  we denote
with $\operatorname{Lin}(\mathcal{H})$  the space  of linear
operators on $\mathcal{H}$. Quantum  states of such a system
are    represented   by    density   matrices    $\rho   \in
\operatorname{Lin}(\mathcal{H})$,          that          is,
positive-semidefinite ($\rho \ge  0$) unit-trace ($\Tr[\rho]
= 1$)  operators. Any  discrete quantum  ensemble of  such a
system is  represented by a family  of sub-normalized states
$\{ \rho_x \in \operatorname{Lin}(\mathcal{H}) \}$, that is,
$\rho_x  \ge  0$ for  any  $x$  and $\Tr\sum_x  \rho_x  =1$.
Equivalently, $\rho  := \sum_x  \rho_x$ is a  quantum state,
and we say that the states composing the ensemble average to
$\rho$.  Any  discrete quantum measurement on  such a system
is represented  by a POVM,  that is  a family $\{  \pi_y \in
\operatorname{Lin}(\mathcal{H})     \}$      of     positive
semi-definite   operators,  such   that   $\sum_y  \pi_y   =
\openone$,  where $\openone$  represents  the unit  element,
that is  the element  with probability  $1$ over  any state.
The  joint probability  distribution  of  outcome $y$  given
input $x$  is given  by the  Born rule,  that is  $p_{x,y} =
\Tr[\rho_x \pi_y]$.  In the continuous case, summations must
be replaced by integrals. In the following, we will consider
both discrete  and continuous  ensembles and POVMs,  and for
simplicity we  will adopt the discrete  notation wherever it
suffices.

The  accessible information~\cite{Wil13}  $A(\{ \rho_x  \})$
and the  informational power~\cite{DDS11} $W(\{ \pi_y  \} )$
are  operationally   defined  as   the  maximum   amount  of
information that  can be extracted from  ensemble $\{ \rho_x
\}$ and by POVM $\{ \pi_y \}$, respectively:
\begin{align*}
  A(\{ \rho_x \}) = \max_{\{\pi_y\}} I( \{ \Tr[\rho_x \pi_y]
  \}),\\ W(\{ \pi_y \}) = \max_{\{\rho_x\}} I( \{ \Tr[\rho_x
    \pi_y] \}),
\end{align*}
where  the  maxima are  over  any  POVM  $\{ \pi_y  \}$  and
ensemble $\{ \rho_x \}$, respectively, and $I(\{ p_{x,y}\})$
denotes  the mutual  information  of  the joint  probability
distribution $\{ p_{x,y }\}$, that is
\begin{align*}
  I(\{    p_{x,y}\})     :=    \sum_{x,y}     p_{x,y}    \ln
  \frac{p_{x,y}}{p_x p_y},
\end{align*}
where $\{  p_x := \sum_y p_{x,y}  \}$ and $\{ p_y  := \sum_x
p_{x,y} \}$ are the marginals of $\{ p_{x,y} \}$.

The accessible  information and the informational  power are
related by the following duality formula~\cite{DDS11}, which
holds for any POVM $\{ \pi_y \}$:
\begin{align}
  \label{eq:duality}
  W(\{  \pi_y \})  = \max_\rho  A\left(\{ \sqrt{\rho}  \pi_y
  \sqrt{\rho} \}\right),
\end{align}
where   the  maximum   is   over  any   state  $\rho$.   The
Jozsa-Robb-Wootters   lower   bound    on   the   accessible
information~\cite{JRW94} of  any ensemble $\{ \rho_x  \}$ is
given by
\begin{align}
  \label{eq:acclower}
  A(\{  \rho_x   \})  \ge   Q(\rho)  -   \sum_x  \Tr[\rho_x]
  Q\left(\frac{\rho_x}{\Tr[\rho_x]} \right),
\end{align}
where  $\rho :=  \sum_x  \rho_x$ and  $Q(\rho)$ denotes  the
subentropy~\cite{JRW94}  of   $\rho$  (usually   defined  by
Eq.~\eqref{eq:subentropy2},   although    here   we   regard
$Q(\rho)$ as  a particular  case of the  quantity $Q_A(\rho,
P)$ defined  by Eq.~\eqref{eq:gen_subent_ens}).   The Holevo
upper bound~\cite{Hol73, Hol01} on accessible information is
given by
\begin{align}
  \label{eq:accupper}
  A(\{  \rho_x   \})  \le   S(\rho)  -   \sum_x  \Tr[\rho_x]
  S\left(\frac{\rho_x}{\Tr[\rho_x]} \right),
\end{align}
where $S(\rho)$ denotes the Von Neumann entropy~\cite{Hol73}
of $\rho$.   It is  well-known~\cite{Hol73, Rus02}  that the
bound  in Eq.~\eqref{eq:accupper}  is tight  if and  only if
$\rho_x$'s are pairwise commuting.

The aim of  this work is to study lower  and upper bounds on
the  accessible  information  $A(\{   \rho_x  \})$  and  the
informational power  $W(\{ \pi_y  \})$ under  constraints on
the purity $P$ of states $\{ \rho_x\}$ and POVM elements $\{
\pi_y  \}$,  where  $P(X)   :=  \Tr[X^2]/\Tr[X]^2$  for  any
self-adjoint operator $X$.

\subsection{Minimum information under purity constraint}

Our  first  result  is  a  lower  bound  on  the  accessible
information and informational power. For fixed Hilbert space
dimension $n$, denote with $Q_A(\rho, P)$ the minimum of the
accessible information  $A(\{\rho_x\})$ of any  ensemble $\{
\rho_x\}$ averaging to state $\rho$ such that $P(\rho_x) \ge
P$ for any $x$, that is
\begin{align}
  \label{eq:gen_subent_ens}
  Q_A(\rho,  P)  := \min_{\substack{\{\rho_x\}\\P(\rho_x)  \ge
      P\\\sum_x \rho_x = \rho}} A( \{ \rho_x \}).
\end{align}
Analogously,  denote  with  $Q_W(P)$   the  minimum  of  the
informational power $W(\{  \pi_y \})$ of any  POVM $\{ \pi_y
\}$ such that $P(\{ \pi_y \}) \ge P$ for any $y$. That is,
\begin{align}
  \label{eq:gen_subent_povm}
  Q_W(P) := \min_{\substack{\{\pi_y\}\\P(\pi_y)  \ge P}} W( \{
  \pi_y \}).
\end{align}

If $P =  1$, the quantity $Q_A(\rho, 1)  =: Q(\rho)$ reduces
to the well-known  subentropy~\cite{JRW94}.  Notice that, by
definition,  the  subentropy  $Q(\phi)$ of  any  pure  state
$\phi$ is  zero.  Ref.~\cite{JRW94} shows that  $Q(\rho)$ is
attained by the $\rho$-distorted  Scrooge ensemble, that is,
the  ensemble  of pure  states  $\{  n \sqrt{\rho}  \phi_x^*
\sqrt{\rho}\}$, where $\{ \phi_x^* \}$ denotes the uniformly
(Haar) distributed  ensemble.  If  $\rho =  \sum_k \lambda_k
\ket{\lambda_k}   \!\!   \bra{\lambda_k}$   is  a   spectral
decomposition of $\rho$, in  the absence of null eigenvalues
and degeneracies one explicitly obtains the formula
\begin{align}
  \label{eq:subentropy2}
  Q(\rho)     =    -     \sum_k    \frac{\lambda_k^n     \ln
    \lambda_k}{\prod_{j \neq k} (\lambda_k - \lambda_j)}.
\end{align}
Limits must  be considered in  case of null  eigenvalues and
degeneracies.   The formula~(\ref{eq:subentropy2})  is often
used to define the subentropy. The following expressions for
$Q_A(\rho,1)$~\cite{JRW94} and $Q_W(1)$~\cite{DBO14} follow
\begin{align}
  \label{eq:acclowerwc}
  \max_\rho Q_A(\rho, 1) =   Q_W(1) = \ln n - \Sigma_n.
\end{align}
Here and in  the following we set  $\Sigma_k := \sum_{j=2}^k
1/j$.

Our   first    main   result   consists    of   generalizing
Eq.~\eqref{eq:acclowerwc} to the case of arbitrary purity $P
\in [1/n, 1]$.
\begin{thm}[Lower bound under purity constraint]
  \label{thm:lower}
  One has
  \begin{align*}
    & \max_\rho Q_A(\rho, P) \\ \ge &  Q_W(P) \\ = & \ln n -
    \sum_{k=2}^n  {n \choose  k}  \frac{a^k \left(  \ln a  -
      \Sigma_k \right)}{(b-a)^{k-1}} +  \frac{b^n \left( \ln
      b - \Sigma_n \right)}{(b-a)^{n-1}},
  \end{align*}
  where  $a   :=  (1-\epsilon)/n$  and  $b   :=  \epsilon  +
  (1-\epsilon)/n$,   with   $\epsilon   :=  \sqrt{(n   P   -
    1)/(n-1)}$.  The  quantity $Q_W(P)$  is attained  by the
  $\epsilon$-depolarized     Scrooge      POVM     $\{     n
  \mathcal{D}_\epsilon (\phi_y^*) \}$.
\end{thm}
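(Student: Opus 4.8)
The plan is to show that the asserted value equals $Q(\openone/n) - Q_{\max}(P)$, where $Q_{\max}(P)$ denotes the maximum of the subentropy $Q(\sigma)$ over all density operators $\sigma$ with $P(\sigma) \ge P$, and then to verify that this common value simultaneously lower-bounds $Q_W(P)$ and $Q_A(\openone/n, P)$ and is attained by the depolarized Scrooge structures. The key structural lemma is that $Q_{\max}(P)$ is attained by the depolarized pure state, i.e.\ by the spectrum $(b, a, \dots, a)$ with one eigenvalue $b$ and $(n-1)$ eigenvalues $a$, where $a$ and $b$ are exactly as in the statement; note that this spectrum saturates $P(\sigma) = P$ and coincides with the spectrum of each element $\mathcal{D}_\epsilon(\phi_y^*)$ of the candidate optimal POVM.

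For the lower bound on $Q_W(P)$, I would take any POVM $\{\pi_y\}$ with $P(\pi_y) \ge P$ and apply the duality relation~\eqref{eq:duality} with the maximally mixed input $\rho = \openone/n$, a merely feasible (not optimal) choice, which gives $W(\{\pi_y\}) \ge A(\{\pi_y/n\})$. The induced ensemble $\{\pi_y/n\}$ averages to $\openone/n$ and has normalized elements $\pi_y/\Tr[\pi_y]$ of purity $\ge P$, so the Jozsa--Robb--Wootters bound~\eqref{eq:acclower} yields $A(\{\pi_y/n\}) \ge Q(\openone/n) - \sum_y (\Tr[\pi_y]/n)\, Q(\pi_y/\Tr[\pi_y])$. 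Since $\sum_y \Tr[\pi_y]/n = 1$, the subtracted term is a convex combination of subentropies, each at most $Q_{\max}(P)$ by the lemma, so $W(\{\pi_y\}) \ge Q(\openone/n) - Q_{\max}(P)$. The identical chain applied to an arbitrary purity-constrained ensemble averaging to $\openone/n$ shows $Q_A(\openone/n, P) \ge Q(\openone/n) - Q_{\max}(P)$, hence $\max_\rho Q_A(\rho, P) \ge Q(\openone/n) - Q_{\max}(P)$. It then remains to evaluate $Q(\openone/n) = \ln n - \Sigma_n$ and $Q_{\max}(P) = Q(b, a, \dots, a)$ in closed form; the latter I obtain by taking the confluent $(n-1)$-fold degenerate limit of the spectral formula~\eqref{eq:subentropy2}, in which the repeated factor $(b-a)$ produces the powers $(b-a)^{k-1}$, the multiplicities produce the binomial coefficients $\binom{n}{k}$, and the successive derivatives generate the $\ln a$ and $\Sigma_k$ terms of the statement.

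For achievability I would check that the depolarized Scrooge POVM $\{n\mathcal{D}_\epsilon(\phi_y^*)\}$ meets the purity constraint with equality, $P(\pi_y) = P$, and attains the bound. By its invariance under $U(\cdot)U^\dagger$, the optimization over $\rho$ in~\eqref{eq:duality} can be restricted to the unique invariant state $\openone/n$, reducing $W$ to the accessible information of the depolarized Scrooge ensemble $\{\mathcal{D}_\epsilon(\phi_y^*)\}$; exhibiting the covariant optimal measurement shows that~\eqref{eq:acclower} is tight for this ensemble, so $A(\{\mathcal{D}_\epsilon(\phi_y^*)\}) = Q(\openone/n) - Q_{\max}(P)$, matching the lower bound and giving $Q_W(P)$ in closed form. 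The same ensemble attains $Q_A(\openone/n, P)$, whence $Q_A(\openone/n, P) = Q_W(P)$ and the first displayed inequality $\max_\rho Q_A(\rho, P) \ge Q_W(P)$ follows immediately. The main obstacle is the structural lemma: two-valued spectra of equal purity need not be comparable in the majorization order (one large and $(n-1)$ small versus $(n-1)$ large and one small are in general unordered), so Schur-concavity of the subentropy alone does not single out the maximizer. I would resolve this by first reducing any candidate spectrum to a two-valued one through one-parameter deformations on triples of eigenvalues that preserve both $\Tr[\sigma]$ and $\Tr[\sigma^2]$, then comparing the finitely many two-valued configurations directly to confirm that the one-large--$(n-1)$-small profile dominates, and finally using the monotonicity of $Q(b,a,\dots,a)$ in the purity to upgrade the boundary constraint $P(\sigma)=P$ to the full constraint $P(\sigma)\ge P$.
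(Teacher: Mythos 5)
Your overall architecture coincides with the paper's: restrict to $\rho=\openone/n$, use the duality~\eqref{eq:duality} and the Jozsa--Robb--Wootters bound~\eqref{eq:acclower} to reduce both $Q_W(P)$ and $Q_A(\openone/n,P)$ to $Q(\openone/n)$ minus the maximum of the subentropy under the purity constraint, identify the maximizer as the depolarized pure state, and certify achievability with the depolarized Scrooge POVM. The first half of your argument (the chain $W(\{\pi_y\})\ge A(\{\pi_y/n\})\ge Q(\openone/n)-Q_{\max}(P)$ and its ensemble analogue) is correct and essentially identical to the paper's. However, the two steps that carry the real mathematical weight are left as sketches with genuine gaps.

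First, the structural lemma. Your triple-deformation argument preserving $\Tr[\sigma]$ and $\Tr[\sigma^2]$ would at best establish that extremal spectra have at most two distinct non-null values (this is how the paper proves its Lemma~2 for R\'enyi entropies, via Lagrange multipliers after peeling off the positivity faces). But the decisive step --- showing that among the finitely many two-valued configurations $(n_a,n_b,\pm)$ of purity $P$ the profile with \emph{one} large eigenvalue $b$ and $n-1$ small eigenvalues $a$ maximizes $Q$ --- is exactly the hard part, and ``comparing the finitely many two-valued configurations directly'' is a statement of what must be proved, not a proof: it requires evaluating $Q$ on degenerate spectra of arbitrary multiplicities and establishing an inequality between transcendental expressions uniformly in $n$ and $P$. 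The paper does not do this from scratch; it imports the result from Property~7 and Lemma~5 of Ref.~\cite{JM15}, which identify the maximizer of $Q$ under a constraint on the elementary symmetric polynomial $e_2$, and then observes $P(\rho)=1-2e_2(\rho)$ to translate that into a purity constraint. Without that citation or an actual comparison argument, your proposal is incomplete at its central point.

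Second, achievability. Restricting the maximum over $\rho$ in~\eqref{eq:duality} to $\openone/n$ ``by invariance'' is not valid as stated: a unitarily invariant function attains its maximum on an invariant \emph{set}, not necessarily at the invariant point, and $\rho\mapsto A(\{\sqrt{\rho}\pi_y\sqrt{\rho}\})$ is not obviously concave. You would need a Davies-type covariance theorem for the optimal ensemble, or --- as the paper does --- a symmetric upper bound $I\le \ln n-\sum_{x,y}\Tr[\rho_x]\Tr[\pi_y]\,\eta(\Tr[\rho_x\pi_y]/(\Tr[\rho_x]\Tr[\pi_y]))$ whose saturation at $\rho=\openone/n$ is manifest. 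Moreover, asserting that~\eqref{eq:acclower} is tight for the depolarized Scrooge ensemble is circular: that tightness is presented in the paper as a \emph{corollary} of the theorem, not an input. One must actually compute $A(\{\mathcal{D}_\epsilon(\phi_x^*)\})=W(\{n\mathcal{D}_\epsilon(\phi_y^*)\})$, which the paper does by reducing to an integral of $\eta\circ g$ over the Haar overlap distribution and evaluating it with the antiderivative formula of Refs.~\cite{Jon91a,Jon91b}; your confluent limit of~\eqref{eq:subentropy2} handles $Q(b,a,\dots,a)$ but not this integral. These two computations happen to agree, which is what closes the proof --- but that agreement must be exhibited, not assumed.
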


Here    and    in    the   following    we    denote    with
$\mathcal{D}_\epsilon$,  and we  call ``depolarizing  map,''
the positive (but not completely--positive) linear map given
by
\begin{align*}
  \mathcal{D}_\epsilon(\rho)  :=  \epsilon   \rho  +  (1-\epsilon)
  \Tr[\rho]  \frac{\openone}n, \qquad  -\frac1{n-1} \le
  \epsilon \le 1.
\end{align*}
Notice that the  above map is self-dual with  respect to the
trace: in other words, it acts on states and measurements in
the  same  way.   Also, the  map  $\mathcal{D}_\epsilon$  is
completely positive for $-(n^2-1)^{-1}  \le \epsilon \le 1$,
as  shown  in  Ref.~\cite{Kin03},  and  coincides  with  the
\textit{depolarizing channel} for $0 \le \epsilon \le 1$.

\subsection{Maximum information under purity constraint}

Our  second  result is  an  upper  bound on  the  accessible
information and informational power. For fixed Hilbert space
dimension $n$, denote with $S_A(\rho, P)$ the maximum of the
accessible information  $A(\{\rho_x\})$ of any  ensemble $\{
\rho_x \}$  averaging to  state $\rho$ such  that $P(\rho_x)
\le P$ for any $x$, that is
\begin{align}
  \label{eq:gen_ent_ens}
  S_A(\rho, P)  := \max_{\substack{\{\rho_x\}\\P(\rho_x) \le
      P\\\sum_x \rho_x = \rho}} A( \{ \rho_x \}).
\end{align}
Analogously,  denote  with  $S_W(P)$   the  maximum  of  the
informational power $W(\{\pi_y\})$ of any POVM $\{ \pi_y \}$
such that $P(\pi_y) \le P$ for any $y$, that is
\begin{align*}
  S_W(P) :=  \max_{\substack{\{\pi_y\}\\P(\pi_y) \le  P}} W(
  \{ \pi_y \}).
\end{align*}

If $P =  1$, the quantity $S_A(\rho, P)  =: S(\rho)$ reduces
to  the well-known  Von  Neumann entropy.   Notice that,  by
definition, the  entropy $S(\phi)$ of any  pure state $\phi$
is zero.  It is well-known that $S(\rho)$ is attained by the
ensemble given by the  spectral decomposition of $\rho$, and
is given by
\begin{align}
  \label{eq:entropy}
  S(\rho) = - \Tr\left[ \rho \log \rho \right].
\end{align}
The formula~\eqref{eq:entropy}  is often used to  define the
entropy.   The following  expressions for  $S_A(\rho,1)$ and
$S_W(1)$~\cite{DBO14} follow
\begin{align}
  \label{eq:gen_entropy_pure}
  S_W(1) = \max_\rho S_A(\rho, 1) = \ln n.
\end{align}

Our   second   main    result   consists   of   generalizing
Eq.~\eqref{eq:gen_entropy_pure}  to  the case  of  arbitrary
purity $P \in [1/n, 1]$.

\begin{thm}[Upper bound under purity constraint]
  \label{thm:upper}
  One has
  \begin{align*}
    S_W(P) \ge \max_\rho S_A(\rho, P) = \ln n +
    \floor{P^{-1}} a \ln a + b \ln b,
  \end{align*}
  where    $a     :=    (1     +    \sqrt{(P     \alpha    -
    1)/\floor{P^{-1}}})/\alpha$    and    $b   :=    (1    -
  \sqrt{\floor{P^{-1}}  (P   \alpha  -   1)})/\alpha$,  with
  $\alpha :=  \floor{P^{-1}} +  1$. The  quantity $\max_\rho
  S_A(\rho, P)$ is  attained by any ensemble  $\{ \rho_x \}$
  of $n$ states such that $\rho_x = a \ket{x}\!  \!\bra{x} +
  b \sum_{k \neq x} \ket{k}\!\!\bra{k}$ for any $x$, for any
  orthonormal basis $\{ \ket{k} \}$.
\end{thm}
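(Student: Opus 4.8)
The plan is to reduce the quantum optimization to a purely classical entropy problem by exploiting the tightness condition of the Holevo bound~\eqref{eq:accupper}. Since the outer $\max_\rho$ leaves the average state free, $\max_\rho S_A(\rho,P)$ coincides with the maximum of $A(\{\rho_x\})$ over \emph{all} ensembles with $P(\rho_x)\le P$. For any such ensemble, writing $p_x:=\Tr[\rho_x]$ and $\sigma_x:=\rho_x/p_x$, the Holevo bound gives $A(\{\rho_x\})\le S(\rho)-\sum_x p_x\,S(\sigma_x)$. Bounding $S(\rho)\le\ln n$ and $S(\sigma_x)\ge S^\ast$, where $S^\ast:=\min\{S(\sigma)\,:\,P(\sigma)\le P\}$ is the least von Neumann entropy compatible with the purity cap, yields $\max_\rho S_A(\rho,P)\le\ln n-S^\ast$. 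The whole problem then hinges on evaluating $S^\ast$ and on exhibiting an ensemble that saturates all three inequalities at once.

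Because purity and entropy depend only on the spectrum, computing $S^\ast$ is the classical problem of minimizing the Shannon entropy $-\sum_i q_i\ln q_i$ of a probability vector $q$ subject to $\sum_i q_i^2\le P$. First I would argue that the constraint is active at the optimum, since lowering the entropy means concentrating mass, which raises the collision probability up to the cap $P$. A Lagrange/KKT analysis then gives the stationarity condition $\ln q_i+2\lambda q_i=\mathrm{const}$; as the map $q\mapsto\ln q+2\lambda q$ attains each value at most twice, an optimal $q$ has at most two distinct nonzero entries, which is the ``at most two different non-null eigenvalues'' claim. It remains to fix the two multiplicities. Here I would use that the smallest support compatible with $\sum_i q_i^2=P$ is $\lceil P^{-1}\rceil=\floor{P^{-1}}+1=\alpha$ (a vector on $s$ outcomes has collision probability at least $1/s$), and that among two-valued distributions of fixed support and fixed collision probability the entropy is minimized by the one with a single minority entry. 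This selects the spectrum with $\floor{P^{-1}}$ copies of the larger value $a$ and one copy of the smaller value $b$; imposing $\floor{P^{-1}}a+b=1$ and $\floor{P^{-1}}a^2+b^2=P$ reproduces exactly the stated $a,b$ and the value $S^\ast=-\floor{P^{-1}}a\ln a-b\ln b$, so that $\ln n-S^\ast$ equals the claimed expression. The delicate step, and the main obstacle, is precisely this discrete optimization over support size and multiplicities: local stationarity only bounds the \emph{number} of distinct values, and ruling out all larger supports and all other multiplicity splits requires a global monotonicity/convexity comparison, which is where the floor function enters.

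For achievability I would construct a commuting ensemble whose states share the entropy-minimizing spectrum and average to the maximally mixed state, turning all three inequalities above into equalities simultaneously. Choosing the states diagonal in a common basis makes them pairwise commute, so the Holevo bound is tight; taking their common spectrum to be the optimal two-valued one forces $S(\sigma_x)=S^\ast$ and $P(\rho_x)=P$; and placing the minority eigenvalue symmetrically over the basis directions, as in the displayed ensemble $\rho_x=a\,\ketbra{x}{x}+b\sum_{k\neq x}\ketbra{k}{k}$, makes $\sum_x\rho_x\propto\openone$, hence $\rho=\openone/n$ and $S(\rho)=\ln n$. A direct verification that this ensemble meets the purity constraint and returns $\ln n-S^\ast$ then pins down $\max_\rho S_A(\rho,P)$ exactly.

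Finally, the inequality $S_W(P)\ge\max_\rho S_A(\rho,P)$ follows from the duality relation~\eqref{eq:duality} applied to the optimal ensemble. Since the optimizer averages to $\rho=\openone/n$, the operators $\pi_x:=n\rho_x$ satisfy $\sum_x\pi_x=\openone$ and thus form a POVM, with $P(\pi_x)=P(\rho_x)\le P$ because purity is invariant under rescaling. Evaluating~\eqref{eq:duality} at the particular choice $\rho=\openone/n$ gives $W(\{\pi_x\})\ge A(\{\sqrt{\openone/n}\,\pi_x\sqrt{\openone/n}\})=A(\{\rho_x\})=\max_\rho S_A(\rho,P)$, and since this POVM is admissible for the defining maximization of $S_W(P)$, one concludes $S_W(P)\ge W(\{\pi_x\})\ge\max_\rho S_A(\rho,P)$, as claimed.
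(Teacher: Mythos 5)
Your proposal follows essentially the same route as the paper: Holevo's bound together with $S(\rho)\le\ln n$, reduction to minimizing the von Neumann entropy of a single normalized state under the purity cap, activity of the constraint by convexity/concavity, a Lagrange-multiplier argument forcing at most two distinct non-null eigenvalues, and achievability via a commuting ensemble averaging to $\openone/n$ whose rescaling $\{n\rho_x\}$ is a POVM yielding $S_W(P)\ge\max_\rho S_A(\rho,P)$. The one step you flag as the main obstacle --- selecting the multiplicities $n_a=\floor{P^{-1}}$, $n_b=1$ and the $+$ branch among the finitely many stationary spectra --- is precisely the step the paper does not prove from scratch either, but imports from Ref.~\cite{HT01}; invoking that result closes your argument.
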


The      results     of      Theorem~\ref{thm:lower}     and
Theorem~\ref{thm:upper}        are        depicted        in
Fig.~\ref{fig:tradeoff}.

\begin{figure}[h!]
  \begin{overpic}[width=\columnwidth]{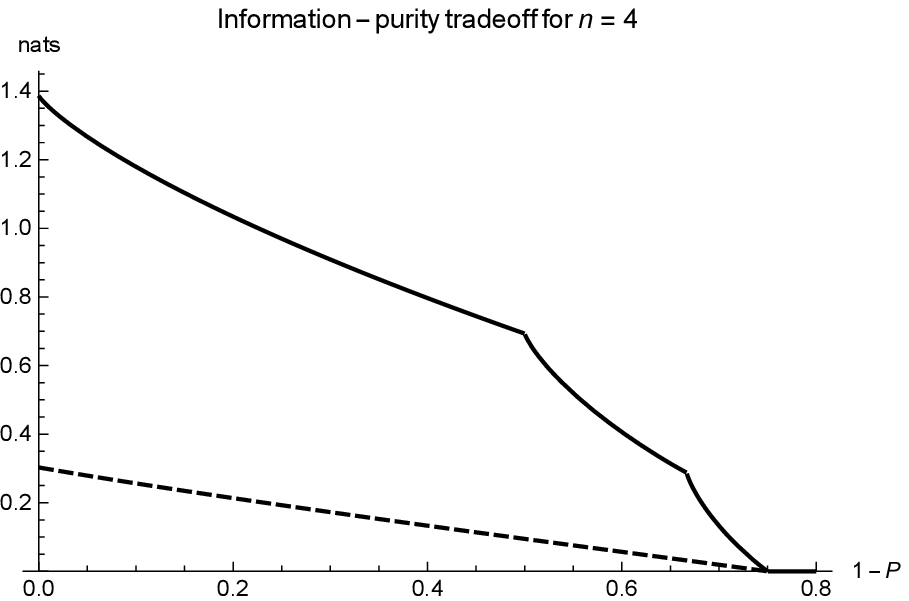}
  \end{overpic}
  \vphantom{X}
  
  \vphantom{X}
  \begin{overpic}[width=\columnwidth]{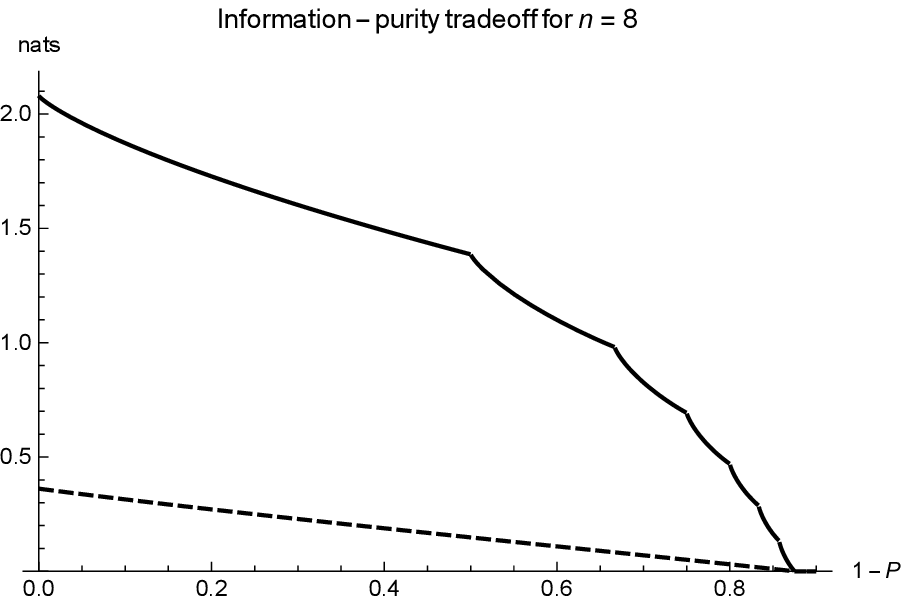}
  \end{overpic}
  \caption{Tradeoff between information  and impurity $1-P$,
    with $P \in [1/n, 1]$, for quantum ensembles and quantum
    measurements, for different values of the dimension $n$.
    The     quantity      $Q_W(P)$,     as      given     by
    Theorem~\ref{thm:lower},  is  represented by  the  lower
    dashed curve.   The maximum  value attained at  $P=1$ is
    $\ln n  - \Sigma_n$.  The quantity  $\max_\rho S_A(\rho,
    P)$, as given by Theorem~\ref{thm:upper}, is represented
    by the upper solid curve.  Notice that, as a consequence
    of  the dependence  on $\floor{P}^{-1}$  in Theorem~(2),
    the   quantity   $\max_\rho   Q_A(\rho,  P)$   has   $n$
    singularities, one for  each $P = 1/k$,  for any integer
    $1 \le k \le n$.}
  \label{fig:tradeoff}
\end{figure}

\setcounter{thm}{0}

\section{Proofs}

\subsection{Minimum     information    under     purity
  constraint}

The     aim    of     this    section     is    to     prove
Theorem~\ref{thm:lower}.  Notice   that  by   replacing  the
maximum over  $\rho$ with $\rho =  \openone/n$ in $\max_\rho
Q_A(\rho, P)$ one immediately has
\begin{align*}
  \max_\rho  Q_A  \left(  \rho,  P \right)  \ge  Q_A  \left(
  \frac{\openone}n, P \right).
\end{align*}
Since  Eq.~\eqref{eq:duality},  with   the  choice  $\rho  =
\openone/n$,   implies   that   \[  W(\{   \pi_y   \})   \ge
A\left(\left\{ \frac{\pi_y}n \right\}\right),
\] one
immediately has
\begin{align*}
  Q_W \left(  P \right)  \ge Q_A \left(  \frac{\openone}n, P
  \right).
\end{align*}
In other  words, both  quantities $\max_\rho Q_A  (\rho, P)$
and  $Q_W  (P)$  are  lower bounded  by  the  same  quantity
$Q_A(\openone / n, P)$.

In turn,  this common  lower bound can  be lower  bounded by
Eq.~\eqref{eq:acclower}.      Recalling~\cite{JRW94}    that
$Q(\openone/n) = \ln n - \Sigma_n$, one has
\begin{align}
  \label{eq:acc_low_bnd_max_mix}
  & Q_A \left( \frac{\openone}n, P \right) \nonumber\\ \ge &
  \ln n - \Sigma_n - \max_{\substack{\{\rho_x\} \\ P(\rho_x)
      \ge P\\\sum_x \rho_x = \openone/n}} \sum_x \Tr[\rho_x]
  Q \left( \frac{\rho_x}{\Tr[\rho_x]} \right).
\end{align}
Let  us   consider  the   last  term   in  the   r.h.s.   of
Eq.~\eqref{eq:acc_low_bnd_max_mix}.    Since  relaxing   the
constraint $\sum_x \rho_x = \openone  / n$ can only increase
the maximum and  the maximum of the average over  $x$ is not
larger than the largest element, one has
\begin{align*}
  \max_{\substack{\{\rho_x\}\\P(\rho_x) \ge P\\ \sum_x\rho_x
      =    \openone/n}}     \sum_x    \Tr[\rho_x]    Q\left(
  \frac{\rho_x}{\Tr[\rho_x]}       \right)       \le       &
  \max_{\substack{\rho\\P(\rho)\ge P}} Q \left(\rho\right).
\end{align*}
By         replacing          this         result         in
Eq.~\eqref{eq:acc_low_bnd_max_mix} one obtains
\begin{align}\label{eq:common-lower}
    Q_A\left(  \frac{\openone}n,  P  \right)  \ge  \ln  n  -
    \Sigma_n  -  \max_{\substack{\rho  \\  P(\rho)  \ge  P}}
    Q(\rho).
\end{align}

Hence, it suffices to compute  the maximum of the subentropy
$Q(\rho)$,  under the  constraint $P(\rho)  \ge P$.   First,
notice  that, without  loss  of  generality, the  constraint
$P(\rho) \ge P$ can be replaced with $P(\rho) = P$, that is
\begin{align*}
  \max_{\substack{\rho  \\
  P(\rho) \ge P}} Q(\rho) =   \max_{\substack{\rho  \\
  P(\rho) = P}} Q(\rho).
\end{align*}

Indeed, for any state $\rho$  such that $P(\rho) > P$, there
exists a value of $\epsilon$ such that the depolarized state
$\mathcal{D}_\epsilon(\rho)$ is such that $P(\rho) = P$, and
$Q(\mathcal{D}_\epsilon(\rho))  >  Q(\rho)$.   This  follows
from  the fact~\cite{DDJB14}  that $Q(\rho)$  is concave  in
$\rho$ and  maximized by $\rho  = \openone/n$, and  from the
fact that  $P(\rho)$ is  convex in  $\rho$ and  minimized by
$\rho         =        \openone/n$,         and        hence
$Q(\mathcal{D}_\epsilon(\rho))$                          and
$P(\mathcal{D}_\epsilon(\rho))$ are monotonically increasing
and decreasing in $\epsilon$, respectively.

Hence, in the following Lemma  we compute the maximum of the
subentropy $Q(\rho)$  under constraint  $P(\rho) =  P$.
\begin{lmm}
  \label{lmm:maxsub}
  The  maximum of  the subentropy  $Q(\rho)$ over  any state
  $\rho$ with purity $P(\rho) = P$, for any $P$, is attained
  by      any     $\epsilon$-depolarized      pure     state
  $\mathcal{D}_\epsilon(\phi)$, with
  \begin{align*}
    \epsilon = \sqrt{\frac{n P - 1}{n-1}}.
  \end{align*}
  Explicitly one has
  \begin{align*}
    & \max_{\substack{\rho \\  P(\rho) = P}} Q(\rho)  \\ = &
    \sum_{k=2}^n  {n \choose  k}  \frac{a^k \left(  \ln a  -
      \Sigma_k \right)}{(b-a)^{k-1}} -  \frac{b^n \left( \ln
      b - \Sigma_n \right)}{(b-a)^{n-1}} - \Sigma_n,
  \end{align*}
  where    $a$   and    $b$   are    the   eigenvalues    of
  $\mathcal{D}_\epsilon(\phi)$  with multiplicity  $n-1$ and
  $1$, respectively, that is
  \begin{align*}
    \begin{cases}
      a   :=   \frac{1-\epsilon}n,\\   b   :=   \epsilon   +
      \frac{1-\epsilon}n.
    \end{cases}
  \end{align*}
\end{lmm}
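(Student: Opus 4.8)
The plan is to reduce everything to the spectrum of $\rho$ and to trade the awkward formula~\eqref{eq:subentropy2} for an integral representation. Since both $Q$ and $P$ are unitarily invariant, the task is to maximize $Q$ as a function of the eigenvalues $(\lambda_1,\dots,\lambda_n)$ subject to $\sum_k\lambda_k=1$, $\sum_k\lambda_k^2=P$, and $\lambda_k\ge0$; the feasible set is compact, so a maximizer exists. The first step is to turn~\eqref{eq:subentropy2} (via partial fractions, i.e. the divided-difference identity $\sum_k \lambda_k^n/[(\lambda_k+s)\prod_{j\neq k}(\lambda_k-\lambda_j)]=1-s^n/\prod_k(\lambda_k+s)$) into
$$Q(\rho)=\int_0^\infty\left(\frac{s}{1+s}-\frac{s^n}{\prod_{k=1}^n(\lambda_k+s)}\right)ds,$$
an identity that extends by continuity to degenerate and singular spectra. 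Since the first term is constant on the feasible set, maximizing $Q$ is equivalent to minimizing $J(\lambda):=\int_0^\infty s^n/\prod_k(\lambda_k+s)\,ds$, that is, to making $\prod_k(\lambda_k+s)$ as large as possible for every $s\ge0$ at once.

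For each fixed $s>0$ I would maximize $\sum_k\log(\lambda_k+s)$ over the feasible set. This function is concave and permutation-symmetric (so $Q$ itself is concave, recovering the property quoted above), and its stationarity condition $1/(\lambda_k+s)=\mu+2\nu\lambda_k$ is a quadratic equation in $\lambda_k$: hence at any constrained maximizer the positive eigenvalues take at most two distinct values. The decisive observation is that after the shift $\mu_k:=\lambda_k+s$ the reduced data are the mean $m=1/n+s$ and the variance $V=(P-1/n)/n$, and \emph{$V$ does not depend on $s$}. Thus for every $s$ one faces the same extremal problem: maximize $\prod_k\mu_k$ among reals $\mu_k\ge s$ with fixed mean and fixed variance. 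Writing a two-valued configuration as mass $p$ at the larger value $u=m+\sqrt{V}\sqrt{(1-p)/p}$ and mass $1-p$ at the smaller value $v=m-\sqrt{V}\sqrt{p/(1-p)}$, I would show that $\phi(p):=p\log u+(1-p)\log v$ is strictly decreasing in $p$, so the optimum is at the smallest admissible $p=1/n$, one large eigenvalue and $n-1$ equal small ones, uniformly in $s$.

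Combining these, the vertex-type spectrum $(b,a,\dots,a)$ maximizes $\prod_k(\lambda_k+s)$ for every $s\ge0$ simultaneously, hence minimizes $J$ and maximizes $Q$. Equivalently, expanding $\prod_k(\lambda_k+s)=\sum_{j=0}^n e_j(\lambda)s^{n-j}$ with $e_0,e_1,e_2$ pinned by the constraints, the vertex-type spectrum maximizes every elementary symmetric polynomial $e_j$, $j\ge3$. It then remains to impose $\sum_k\lambda_k=1$ and $\sum_k\lambda_k^2=P$ on $(b,a,\dots,a)$: a short computation gives $(1-\epsilon^2)/n+\epsilon^2=P$, i.e. $\epsilon=\sqrt{(nP-1)/(n-1)}$ with $a=(1-\epsilon)/n$ and $b=\epsilon+(1-\epsilon)/n$, which is exactly $\mathcal{D}_\epsilon(\phi)$. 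Evaluating $Q$ on this degenerate spectrum, by taking the $(n-1)$-fold confluent limit $\lambda_j\to a$ in~\eqref{eq:subentropy2} (or in the integral representation above), yields the stated closed form.

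The main obstacle is the uniform-in-$s$ optimality: that a single spectrum maximizes $\prod_k(\lambda_k+s)$ for all $s$. That each per-$s$ maximizer is two-valued is immediate from the quadratic stationarity condition, but a priori distinct values of $s$ could favor distinct multiplicity splits; what rescues the argument is the $s$-independence of $V$ together with the monotonicity of $\phi(p)$, which forces the same split $p=1/n$ for every $s$. Near the center this monotonicity is the statement that the leading symmetry-breaking contribution, the cubic Schur term proportional to $\sum_k(\lambda_k-1/n)^3$, is maximized in the vertex direction rather than the face direction; establishing the monotonicity of $\phi$ over the whole admissible range of $p$ is the one genuinely computational point.
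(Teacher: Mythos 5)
Your route is genuinely different from the paper's. The paper does not re-derive the optimality of the one-large/$(n-1)$-equal-small spectrum at all: it observes that $P(\rho)=1-2e_2(\rho)$, so the purity constraint coincides with the symmetric-polynomial constraint of Ref.~\cite{JM15}, and imports the optimizer wholesale from Property~7 and Lemma~5 of that reference; essentially all of its effort then goes into evaluating $Q(\mathcal{D}_\epsilon(\phi))$ on the degenerate spectrum, which it does twice (by $n-1$ iterated integrations over the probability simplex using the representation $Q=G-\Sigma_n$ of Ref.~\cite{JRW94}, and independently by the divided-difference formula of Ref.~\cite{SSJ17} together with the multinomial theorem -- the latter being what lets the authors correct Eq.~A24 of Ref.~\cite{JM15}). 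Your integral representation is correct (your partial-fraction identity checks out, and the integrand decays like $s^{-2}$), the reduction to the simultaneous maximization of $\prod_k(\lambda_k+s)$ is sound because $Q$ is increasing in each $e_j$, and the observation that the shifted variance is $s$-independent is an attractive way to repackage the content of Lemma~5 of Ref.~\cite{JM15}.

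That said, the steps you defer are exactly where the content lies, so as written the proposal has gaps. First, the monotonicity of $\phi(p)$ is not optional: the Lagrange condition alone leaves every split $(n_a,n_b)$ and both sign branches on the table. Fortunately it closes in two lines: with $t:=\sqrt{p/(1-p)}$ one has $u-v=\sqrt{V}\,(t+1/t)$ and
\begin{align*}
\frac{d\phi}{dt}=\frac{2\sqrt{V}}{(1+t^2)(u-v)}\left[\ln\frac{u}{v}-\frac12\left(\frac{u}{v}-\frac{v}{u}\right)\right]<0,
\end{align*}
by the elementary inequality $\ln r<\tfrac12\left(r-1/r\right)$ for $r>1$; you should include this. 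Second, your candidate list is incomplete: a constrained maximizer may lie on a positivity face, where the KKT conditions permit spectra with \emph{three} distinct values $(u,\dots,u,v,\dots,v,0,\dots,0)$, and these are not members of your one-parameter family $\phi(p)$. They are trivially beaten at $s=0$, but for $s>0$ they contribute factors of $s$ to $\prod_k(\lambda_k+s)$ and must be excluded by a separate (e.g.\ inductive-in-$n$) argument. Third, ``taking the $(n-1)$-fold confluent limit yields the stated closed form'' cannot be waved at: producing the binomial sum with the $\Sigma_k$ coefficients from the degenerate spectrum is the bulk of the paper's appendix, it is error-prone (the paper's careful version of it corrects a published formula), and without it you have not actually verified the displayed expression, which is half of the lemma's claim.
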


\begin{proof}
  We discuss here  a sketch of our proof,  which is formally
  provided in the Appendix.  Our  proof is based on a result
  of  Ref.~\cite{JM15},   where  the  maximization   of  the
  subentropy  was  considered  under  a  constraint  on  the
  symmetric polynomial  of degree  two.  We first  show that
  such a  constraint is  equivalent to a  purity constraint,
  and  hence  the  same  state $\rho$  is  optimal  for  the
  optimization  problem considered  here.  Then,  we compute
  the accessible  information $Q(\rho)$  of such  an optimal
  state, a  non-trivial task  given the $n-1$  degeneracy of
  its spectrum and hence the impossibility to directly apply
  Eq.~\eqref{eq:subentropy2}.   The explicit  calculation is
  carried  out  in  two  equivalent ways:  by  means  of  an
  integral  representation~\cite{JRW94}  of the  subentropy,
  and    by    means    of    a    formula    for    divided
  differences~\cite{SSJ17}.
\end{proof}

Applying        Lemma~\ref{lmm:maxsub}         to        the
bound~(\ref{eq:common-lower}),  we can  now lower  bound the
two quantities of interest as follows:
\begin{align}
  \label{eq:acclowerwcbound}
  & \max_\rho  Q_A \left( \rho, P  \right) \nonumber\\ \ge &  \ln n -
  \sum_{k=2}^n  {n  \choose  k}  \frac{a^k \left(  \ln  a  -
    \Sigma_k \right)}{(b-a)^{k-1}} +  \frac{b^n \left( \ln b
    - \Sigma_n \right)}{(b-a)^{n-1}}\;,\\
  \label{eq:infolowerwcbound}
  &  Q_W\left(  P   \right)  \nonumber\\  \ge  &   \ln  n  -
  \sum_{k=2}^n  {n  \choose  k}  \frac{a^k \left(  \ln  a  -
    \Sigma_k \right)}{(b-a)^{k-1}} +  \frac{b^n \left( \ln b
    - \Sigma_n \right)}{(b-a)^{n-1}}\;.
\end{align}

We  prove   now  the  tightness   of  the  lower   bound  in
Eq.~\eqref{eq:infolowerwcbound}.   In Ref.~\cite{BDS16},  in
the context of mixed $t$-designs, the accessible information
of   the   $\epsilon$-depolarized   Scrooge   ensemble   $\{
\mathcal{D}_\epsilon(\phi_x^*)  \}$  and  the  informational
power  of  the  $\epsilon$-depolarized Scrooge  POVM  $\{  n
\mathcal{D}_\epsilon(\phi_y^*)  \}$  were   derived  for  $0
\le\epsilon \le 1$.   We generalize that result  to the case
$-(n-1)^{-1} \le \epsilon \le 1$.

To  this   aim,  we  generalize   an  upper  bound   to  the
informational power derived in Ref.~\cite{BDS16} to the case
of  accessible information.   We start  by noticing  that by
definition
\begin{align*}
  & I(\{  \Tr[\rho_x \pi_y] \})  \\ =  & \ln n  + \sum_{x,y}
  \Tr[\rho_x      \pi_y]     \ln\left(      \frac{\Tr[\rho_x
      \pi_y]}{\Tr[\rho_x]   \Tr[\pi_y]}  \right) \\ &  -  \sum_y
  \Tr[\rho     \pi_y]     \ln\left(     n     \frac{\Tr[\rho
      \pi_y]}{\Tr[\pi_y]} \right).
\end{align*}
Since both $\{ \Tr[\pi_y]/n \}$  and $\{ \Tr[\rho \pi_y] \}$
are probability  distributions, the last term  in the r.h.s.
is  the relative  entropy  $D(\{ \Tr[\rho  \pi_y]  \} ||  \{
\Tr[\pi_y]/n \})$, which is  non-negative, and null if $\rho
=  \openone/n$. Hence,  disregarding  the last  term in  the
r.h.s. and setting $\eta(x) := -x \ln x$, one has
\begin{align*}
  & I(\{ \Tr[\rho_x  \pi_y] \}) \\ \le & \ln  n + \sum_{x,y}
  \Tr[\rho_x      \pi_y]     \ln\left(      \frac{\Tr[\rho_x
      \pi_y]}{\Tr[\rho_x] \Tr[\pi_y]} \right) \\ = & \ln n +
  \sum_{x,y}      \Tr[\rho_x]     \Tr[\pi_y]\frac{\Tr[\rho_x
      \pi_y]}{\Tr[\rho_x]        \Tr[\pi_y]}       \ln\left(
  \frac{\Tr[\rho_x      \pi_y]}{\Tr[\rho_x]      \Tr[\pi_y]}
  \right)\\   =   &   \ln   n   -   \sum_{x,y}   \Tr[\rho_x]
  \Tr[\pi_y]\ \eta\left(\frac{\Tr[\rho_x\pi_y]}{\Tr[\rho_x]\Tr[\pi_y]}
  \right),
\end{align*}
which  is nicely  symmetric  in the  ensemble  and the  POVM
(notice that in the  denominator we have $\Tr[\pi_y]$ rather
than $\Tr[\rho \pi_y]$). Hence, we can use it to upper bound
both the accessible information  and the informational power
in the  same way.  Notice  also that the argument  of $\eta$
does not depend on the traces of $\rho_x$ and $\pi_y$: these
can be  rescaled at will  without changing the value  of the
ratio within parentheses. Thus we  can recast the problem as
an optimization over a single normalized state, as follows.

By definition, the accessible  information is the maximum of
the mutual information over all POVMs, hence
\begin{align*}
  & A(  \{\rho_x\} )  \\ \le  & \ln  n -  n \min_{\{\pi_y\}}
  \sum_{x,y}  \Tr[\rho_x]  \frac{\Tr[\pi_y]}n  \  \eta\left(
  \frac{\Tr[\rho_x \pi_y]}{\Tr[\rho_x]\Tr[\pi_y]} \right).
\end{align*}
In the above  equation, we introduced a factor  $n$, so that
the    coefficient   $\Tr[\pi_y]/n$    is   a    probability
distribution.  Hence,  the minimum  over $\{\pi_y\}$  of the
average over $y$ is not  less than the global minimum, i.e.,
it can be bounded as follows:
\begin{align}
  \label{eq:acc_up_bnd_1}
  A(\{\rho_x\})   \le   \ln   n   -   n   \min_\phi   \sum_x
  \Tr[\rho_x]\          \eta\left(          \frac{\Tr[\rho_x
      \phi]}{\Tr[\rho_x]} \right),
\end{align}
where  now the  minimum is  taken over  a single  normalized
state $\phi$ (which  can be chosen pure,  although this does
not matter  at this point).   Notice that equality  holds if
$\sum_x \rho_x = \openone/n$ and $\openone/n$ belongs to the
convex hull of  the set of states attaining  the minima over
$\phi$. The former condition  is sufficient for the relative
entropy $D(\{ \Tr[\rho \pi_y] \}  || \{ \Tr[\pi_y]/n \})$ to
be  zero,  as  discussed   before.   The  latter  condition,
instead,  is necessary  and  sufficient for  the r.h.s.   of
Eq.~\eqref{eq:acc_up_bnd_1}   to   be  equivalent   to   the
r.h.s. of the previous equation.

Along exactly the same lines, by definition of informational
power, one has
\begin{align*}
  & W(\{  \pi_y \})  \\ \le  & \ln  n -  n \min_{\{\rho_x\}}
  \sum_{x,y}  \Tr[\rho_x] \frac{\Tr[\pi_y]}n  \ \eta  \left(
  \frac{\Tr[\rho_x \pi_y]}{\Tr[\rho_x] \Tr[\pi_y]}\right).
\end{align*}
Again,  since $\Tr[\rho_x]$  is  a probability  distribution
over $x$,  the minimum  over $\{ \rho_x  \}$ of  the average
over  $x$ is  lower bounded  by  the minimum  over a  single
normalized state $\phi$ as follows
\begin{align}
  \label{eq:info_up_bnd_1}
  W( \{ \pi_y \} )  \le \ln  n -  n \min_\phi  \sum_y \frac{\Tr[\pi_y]} n \    \eta\left(    \frac{\Tr[\pi_y \phi]}{\Tr[\pi_y]} \right),
\end{align}
with equality if $\openone/n$ belongs  to the convex hull of
the set of states attaining the minima over $\phi$.

We  compute the  bounds  in Eq.~\eqref{eq:acc_up_bnd_1}  and
Eq.~\eqref{eq:info_up_bnd_1} for the  depolarized version of
the uniformly distributed pure ensemble, that is $\{\rho_x =
\mathcal{D}_\epsilon(\phi_x^*)\}$,  and for  the depolarized
version of the uniformly  distributed rank-one POVM, that is
$\{n  \mathcal{D}_\epsilon(\phi_y^*)\}$,   respectively.  We
also show  that, in these  cases, the bounds are  tight.  To
these aims,  first notice that  the summation in  the r.h.s.
of              Eq.~\eqref{eq:acc_up_bnd_1}              and
Eq.~\eqref{eq:info_up_bnd_1}, which  are identical  in form,
must be replaced in these  cases by an integral over uniform
measure $\dif \mu_x$. Since, by direct calculation,
\begin{align*}
  \Tr[\mathcal{D}_\epsilon   (\phi^*_g)    \phi]   =   (b-a)
  \left|\braket{\phi | \phi_g^*}\right|^2 + a \ ,
\end{align*}
one has, setting $g(x) := (b-a) x + a$,
\begin{align}
  \label{eq:integral}
  & \min_\phi \int \dif \mu_x \ \braket{\phi_x^* | \phi_x^*}
  \eta   \left(  \frac{\Tr[\mathcal{D}_\epsilon   (\phi^*_x)
      \phi]}{\braket{\phi_x^*    |     \phi_x^*}}    \right)
  \nonumber\\    =    &    \min_\phi   \int    \dif    \mu_x
  \  \braket{\phi_x^*  |  \phi_x^*}   \eta  \circ  g  \left(
  \frac{\left|\braket{\phi                                 |
      \phi_x^*}\right|^2}{\braket{\phi_x^*    |   \phi_x^*}}
  \right) \ .
\end{align}

Due  to  unitary  invariance,  the minimum  over  $\phi$  is
independent  of  $\phi$, so  in  the  following $\phi$  will
denote an arbitrarily chosen  pure state.  Hence, the bounds
in              Eq.~\eqref{eq:acc_up_bnd_1}              and
Eq.~\eqref{eq:info_up_bnd_1} are tight.

To    compute    the    integral   in    the    r.h.s.    of
Eq.~\eqref{eq:integral}, we resort  to the following result,
proved  in Refs.~\cite{Jon91a}  and~\cite{Jon91b}.  For  any
integrable function $f$ one has
\begin{align}
  \label{eq:jones}
  & \int  \dif \mu_x  \braket{\phi_x^* | \phi_x^*}  f \left(
  \frac{\left|\braket{\phi                                 |
      \phi_x^*}\right|^2}{\braket{\phi_x^*    |   \phi_x^*}}
  \right)  \nonumber\\ =  &  (n-1)!   \left[ [f]^{n-1}(1)  -
    \sum_{k=2}^n \frac{[f]^{k-1}(0)}{(n-k)!}  \right],
\end{align}
where  $\{  [f]^m  \}_{m=1}^{n-1}$ represents  a  choice  of
$m$-degree antiderivatives of $f$,  namely $[f] := \int \dif
x f(x)$  and $[f]^m := [  [f]^{m-1} ]$.  Of course,  for any
choice  of $[f]^{m-1}$,  one  has that  $[f]^m$ is  uniquely
defined  up  to  a  constant,  but  Eq.~\eqref{eq:jones}  is
independent  of  such   a  choice  (see  Refs.~\cite{Jon91a,
  Jon91b}).

It was  also shown in  Refs.~\cite{Jon91a} and~\cite{Jon91b}
that
\begin{align*}
  [\eta]^m  =  -  \frac{x^{m+1}}{(m+1)!}  \left(  \log  x  -
  \Sigma_{m+1} \right) \ ,
\end{align*}
and,  given  that  $g$  is an  affine  function,  by  direct
computation one immediately has
\begin{align}
  \label{eq:antiderivatives}
  [\eta \circ g]^m = \frac1{(b-a)^m} [\eta]^m \circ g \; .
\end{align}
Since in our  case one has $f = \eta  \circ g$, by replacing
Eq.~\eqref{eq:antiderivatives} into  Eq.~\eqref{eq:jones} we
obtain the accessible information $A(\{ \mathcal{D}_\epsilon
(\phi_x^*)\})$ of  the depolarized version of  the uniformly
distributed  ensemble  $\{\mathcal{D}_\epsilon(\phi_x^*)\}$,
and      the      informational      power      $W(\{      n
\mathcal{D}_\epsilon(\phi_y^*)   \})$  of   the  depolarized
version  of the  uniformly distributed  rank-one POVM  $\{ n
\mathcal{D}_\epsilon(\phi_y)\}$, as follows
\begin{align*}
  & A( \{ \mathcal{D}_\epsilon(\phi_x^*) \})  \\ = & W( \{ n
  \mathcal{D}_\epsilon  (\phi_y^*)  \})  \\  =  &  \ln  n  -
  \sum_{k=2}^n  {n  \choose  k}  \frac{a^k \left(  \ln  a  -
    \Sigma_k \right)}{(b-a)^{k-1}} +  \frac{b^n \left( \ln b
    - \Sigma_n \right)}{(b-a)^{n-1}} \; ,
\end{align*}
which proves the tightness of the bound on the informational
power   in  Eq.\eqref{eq:infolowerwcbound}   (but  not   the
tightness  of the  bound  on the  accessible information  in
Eq.\eqref{eq:acclowerwcbound},  given that  it is  a maximin
problem).

Summarizing, we have the following first main result.

\begin{thm}[Lower bound under purity constraint]
  One has
  \begin{align*}
    & \max_\rho Q_A(\rho, P) \\ \ge &  Q_W(P) \\ = & \ln n -
    \sum_{k=2}^n  {n \choose  k}  \frac{a^k \left(  \ln a  -
      \Sigma_k \right)}{(b-a)^{k-1}} +  \frac{b^n \left( \ln
      b - \Sigma_n \right)}{(b-a)^{n-1}},
  \end{align*}
  where  $a   :=  (1-\epsilon)/n$  and  $b   :=  \epsilon  +
  (1-\epsilon)/n$,   with   $\epsilon   :=  \sqrt{(n   P   -
    1)/(n-1)}$.  The  quantity $Q_W(P)$  is attained  by the
  $\epsilon$-depolarized     Scrooge      POVM     $\{     n
  \mathcal{D}_\epsilon (\phi_y^*) \}$.
\end{thm}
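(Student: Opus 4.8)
The plan is to reduce both $\max_\rho Q_A(\rho,P)$ and $Q_W(P)$ to a common, tractable lower bound, evaluate that bound in closed form, and then exhibit a measurement that saturates it. First I would evaluate the duality formula~\eqref{eq:duality} at the maximally mixed state: the choice $\rho=\openone/n$ gives $W(\{\pi_y\}) \ge A(\{\pi_y/n\})$ for every POVM, whence $Q_W(P) \ge Q_A(\openone/n,P)$. Since $\max_\rho Q_A(\rho,P) \ge Q_A(\openone/n,P)$ holds trivially, both quantities are bounded below by the single quantity $Q_A(\openone/n,P)$.

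Next I would bound $Q_A(\openone/n,P)$ from below using the Jozsa-Robb-Wootters inequality~\eqref{eq:acclower}. For ensembles averaging to $\openone/n$ this yields $Q_A(\openone/n,P) \ge \ln n - \Sigma_n - \max_{P(\rho)\ge P} Q(\rho)$, after inserting $Q(\openone/n)=\ln n - \Sigma_n$, relaxing the averaging constraint, and bounding the average subentropy by its largest term. It then remains to maximize the subentropy under the purity constraint. Since $Q$ is concave with maximum at $\openone/n$ while $P$ is convex with minimum there, depolarizing toward $\openone/n$ simultaneously raises $Q$ and lowers $P$; the constraint can therefore be tightened to $P(\rho)=P$, and the maximizer is the $\epsilon$-depolarized pure state of Lemma~\ref{lmm:maxsub} with the stated $\epsilon$. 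Substituting the closed form of Lemma~\ref{lmm:maxsub} produces the claimed expression as a lower bound for both $\max_\rho Q_A(\rho,P)$ and $Q_W(P)$.

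To prove that the bound on $Q_W(P)$ is in fact attained, I would compute the informational power of the $\epsilon$-depolarized Scrooge POVM $\{n\mathcal{D}_\epsilon(\phi_y^*)\}$ directly. The key device is a symmetric upper bound on $I(\{\Tr[\rho_x\pi_y]\})$: expanding the mutual information, discarding the non-negative relative-entropy term $D(\{\Tr[\rho\pi_y]\}\,\|\,\{\Tr[\pi_y]/n\})$---which vanishes exactly when $\rho=\openone/n$---and rewriting the remainder through $\eta(x)=-x\ln x$ yields a bound whose minimization can be pushed inside onto a single normalized pure state $\phi$. For the uniformly distributed rank-one POVM the optimization then collapses to a Haar integral of $\eta\circ g$, where $g(x)=(b-a)x+a$ encodes the depolarization through the eigenvalues $a,b$.

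The hard part is evaluating this integral, which I would carry out with the Jones formula~\eqref{eq:jones}, the explicit $m$-fold antiderivatives $[\eta]^m$, and the affine scaling identity~\eqref{eq:antiderivatives}. The main obstacle is that the spectrum of $\mathcal{D}_\epsilon(\phi)$ is $(n-1)$-fold degenerate, so the formula~\eqref{eq:subentropy2} cannot be applied directly---its denominators $\prod_{j\ne k}(\lambda_k-\lambda_j)$ vanish---and it is the integral representation (or, equivalently, a divided-difference formula) that circumvents the degeneracy. Unitary invariance then makes the inner minimum independent of $\phi$ and places $\openone/n$ in the convex hull of the minimizers, so the equality condition following~\eqref{eq:info_up_bnd_1} holds and the bound is saturated. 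This yields $W(\{n\mathcal{D}_\epsilon(\phi_y^*)\})$ equal to the claimed closed form; since $Q_W(P)$ is a minimum it cannot exceed this value, so it matches the lower bound and equality follows. Finally, the chain $\max_\rho Q_A(\rho,P) \ge Q_A(\openone/n,P) \ge Q_W(P)$ delivers the remaining inequality.
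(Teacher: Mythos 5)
Your proposal is correct and follows essentially the same route as the paper: the common lower bound $Q_A(\openone/n,P)$ via the duality formula, the Jozsa--Robb--Wootters bound combined with the constrained subentropy maximization of Lemma~\ref{lmm:maxsub}, and tightness via the symmetric $\eta$-based upper bound on mutual information evaluated on the depolarized Scrooge POVM with the Jones integration formula. The only cosmetic remark is that your closing chain $Q_A(\openone/n,P)\ge Q_W(P)$ is only justified \emph{after} you have shown $Q_W(P)$ equals the closed form, which your argument does establish.
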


Theorem~\ref{thm:lower} sheds  new light  on the  problem of
the tightness of the  Jozsa-Robb-Wootters lower bound on the
accessible information  in Eq.~\eqref{eq:acclower}.  Indeed,
from  Eq.~\eqref{eq:acc_low_bnd_max_mix} it  follows that  a
sufficient condition for tightness  is that the ensemble $\{
\rho_x  \}$ is  the $\epsilon$-depolarized  Scrooge ensemble
$\{  \mathcal{D}_\epsilon (\phi_x^*)  \}$,  for  any $0  \le
\epsilon \le 1$.  This generalizes the previously known fact
that  the  bound  in Eq.~\eqref{eq:acclower}  is  tight  for
$\epsilon = 1$.

\subsection{Maximum     information    under     purity
  constraint}

The aim of this section is to prove Theorem~\ref{thm:upper}.
By  applying  Eq.~\eqref{eq:accupper}  and using  the  bound
$S(\rho) \le \ln n$ we have
\begin{align*}
  \max_\rho S_A  \left( \rho, P  \right) \le \ln n  - \min_{
    \substack{\{  \rho_x  \}\\P(\rho_x)   \le  P}  }  \sum_x
  \Tr[\rho_x] S\left(\frac{\rho_x}{\Tr[\rho_x]} \right),
\end{align*}
which is  tight if and only  if the minimum over  $\{ \rho_x
\}$ is attained by an ensemble of commuting states averaging
to  the maximally  mixed  state. Since  the  minimum of  the
average of $S(\rho_x / \Tr[\rho_x])$ is not smaller that the
minimum of $S(\rho)$, one has
\begin{align*}
    \max_\rho  S_A  \left(  \rho,  P \right)  \le  \ln  n  -
    \min_{\substack{\rho\\P(\rho)   \le    P}}   S\left(\rho
    \right),
\end{align*}
which  is tight  if and  only if  the maximally  mixed state
belongs to  the convex hull  of the set of  states attaining
the minimum over $\rho$.

Hence, in the following we address the problem of minimizing
the  Von  Neumann  entropy  under  an  upper  bound  on  the
purity. First, notice  that since $P(\rho) \le  P$ defines a
convex set and $S(\rho)$ is concave, the minimum is attained
on the boundary, that is
\begin{align*}
  \min_{\substack{\rho\\P(\rho) \le  P}} S\left(\rho \right)
  = \min_{\substack{\rho\\P(\rho) = P}} S\left(\rho \right).
\end{align*}

In  Ref.~\cite{HT01}, the  maximum  and minimum  of the  Von
Neumann entropy  under an equality constraint  on the purity
(therein  referred  to as  the  index  of coincidence  of  a
classical  probability  distribution)  were derived  with  a
topological approach.  We also notice that analogous results
were  discussed  in   Ref.~\cite{WNGKMV03}  to  characterize
maximally  entangled   states  for   given  purity   of  the
marginals. In the following Lemma, we provide a simple proof
of a partial result  of Ref.~\cite{HT01}, that we generalize
to    the     case    of    arbitrary     R\'enyi    entropy
$H_\alpha(\vec\lambda)   :=   (1-\alpha)^{-1}   \ln   \sum_k
\lambda_k^\alpha$.   The  case  of Von  Neumann  entropy  is
recovered    since    $S(\rho)    =    \lim_{\alpha\to    1}
H_\alpha(\vec\lambda)$,  where  $\rho  :=  \sum_k  \lambda_k
\ket{\lambda_k}  \!  \!   \bra{\lambda_k}$,  and the  purity
constraint $P(\rho) = P$ becomes $|\vec\lambda|_2^2 = P$.

\begin{lmm}
  \label{lmm:maxminent}
  Under constraints $\vec\lambda  \ge 0$, $|\vec\lambda|_1 =
  1$,   and  $|\vec\lambda|_2^2   =  P$,   the  extrema   of
  $H_\alpha(\vec\lambda)$  are attained  by a  $\vec\lambda$
  with at most two different non-null eigenvalues, that is
  \begin{align*}
    \vec\lambda =  \left( a_\pm, \dots, a_\pm,  b_\pm, \dots
    b_\pm, 0, \dots 0 \right),
  \end{align*}
  where  $(a_+, b_+)$  and  $(a_-, b_-)$  are  the only  two
  assignments   that  satisfy   the  constraints,   and  are
  explicitly given by
  \begin{align}
    \label{eq:a_pm}
    a_\pm & := \frac{1\pm\sqrt{\frac{n_b}{n_a}\left( P(n_a +
        n_b) -1 \right)}}{n_a + n_b},\\
    \label{eq:b_pm}
    b_\pm & := \frac{1\mp\sqrt{\frac{n_a}{n_b}\left( P(n_a +
        n_b) -1 \right)}}{n_a + n_b},
  \end{align}
  where  $n_a$ and  $n_b$  denote  the multiplicity  of
  $a_\pm$ and $b_\pm$, respectively. Explicitly one has
  \begin{align*}
    \min_{\substack{\vec\lambda\ge0\\|\vec\lambda|_1=1\\|\vec\lambda|_2^2
        = P}} H_\alpha\left(\vec\lambda \right) = \min_{n_a,
      n_b,  \pm} \left[  \frac{1}{1-\alpha}  \ln \left(  n_a
      a_\pm^\alpha     +     n_b    b_\pm^\alpha     \right)
      \right],\\ \max_{\substack{\vec\lambda\ge0\\|\vec\lambda|_1=1\\|\vec\lambda|_2^2
        = P}} H_\alpha\left(\vec\lambda \right) = \max_{n_a,
      n_b,  \pm} \left[  \frac{1}{1-\alpha}  \ln \left(  n_a
      a_\pm^\alpha + n_b b_\pm^\alpha \right) \right].
  \end{align*}
\end{lmm}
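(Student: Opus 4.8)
The plan is to recast the problem as a smooth constrained optimization and invoke a Lagrange-multiplier argument whose stationarity condition directly limits the number of distinct positive eigenvalues. First I would eliminate the entropy functional in favor of the power sum. Since $H_\alpha(\vec\lambda) = (1-\alpha)^{-1}\ln p_\alpha(\vec\lambda)$ with $p_\alpha(\vec\lambda) := \sum_k \lambda_k^\alpha$, and $\ln$ is strictly increasing, $H_\alpha$ is a strictly monotone function of $p_\alpha$ (increasing for $\alpha<1$, decreasing for $\alpha>1$). Hence the extremizers of $H_\alpha$ on the feasible set coincide with those of $p_\alpha$, with $\min$ and $\max$ possibly interchanged according to the sign of $1-\alpha$. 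This reduces the task to locating the extrema of the symmetric smooth function $p_\alpha$ subject to the equality constraints $|\vec\lambda|_1 = 1$ and $|\vec\lambda|_2^2 = P$ on the simplex.

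Next I would write the Karush--Kuhn--Tucker conditions. At any extremizer, complementary slackness kills the multipliers of the inactive nonnegativity constraints, so every \emph{positive} coordinate $\lambda_k$ must satisfy the \emph{same} stationarity equation $\alpha \lambda_k^{\alpha-1} = \mu + 2\nu\lambda_k$ for common multipliers $\mu,\nu$ associated to the two equality constraints. The crux of the argument --- and the step I expect to be the main obstacle --- is then to bound the number of positive solutions of $h(t) := \alpha t^{\alpha-1} - 2\nu t - \mu = 0$. I would do this by differentiating twice: $h''(t) = \alpha(\alpha-1)(\alpha-2)\,t^{\alpha-3}$ has constant sign on $(0,\infty)$ for $\alpha \notin \{1,2\}$, so $h'$ is strictly monotone and vanishes at most once, whence $h$ has a single local extremum and at most two positive zeros. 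The borderline $\alpha = 2$ is the degenerate case in which $p_\alpha$ is itself the constraint, and $\alpha \to 1$ recovers the Von Neumann case by continuity. Consequently the positive entries of any extremizer take at most two distinct values, which is exactly the claimed structure.

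Finally I would pin down the values. Writing $n_a$ and $n_b$ for the multiplicities of the two admissible values $a$ and $b$, the constraints collapse to the pair $n_a a + n_b b = 1$ and $n_a a^2 + n_b b^2 = P$. Eliminating $b$ yields a single quadratic in $a$, whose two roots are precisely the branches $(a_+, b_+)$ and $(a_-, b_-)$ of Eqs.~\eqref{eq:a_pm}--\eqref{eq:b_pm}; one then checks that these are real and nonnegative exactly in the admissible purity range. Since there are only finitely many choices of multiplicities $(n_a, n_b)$ with $n_a + n_b \le n$ (the remaining entries being null) and of sign branch $\pm$, evaluating $H_\alpha$ on each candidate and optimizing over this finite set delivers the stated $\min$ and $\max$ formulas. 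The only care needed beyond the root-counting step is to confirm feasibility of each discrete candidate (nonnegativity of the radicands) for the given $P$, after which the enumeration is routine.
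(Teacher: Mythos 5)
Your proposal is correct and follows essentially the same route as the paper: reduce $H_\alpha$ to the power sum $\sum_k\lambda_k^\alpha$ by monotonicity of $\ln$, impose stationarity via Lagrange multipliers so that every positive entry solves $\alpha t^{\alpha-1}-2\nu t-\mu=0$, bound the number of positive roots by two, and then solve the two-value constraint system and enumerate over the finitely many multiplicities and sign branches. If anything, your explicit second-derivative count of the roots of $h(t)$ is more rigorous than the paper's brief appeal to the stationarity condition having ``well-defined concavity,'' and your KKT/complementary-slackness treatment of the positivity constraints is an equivalent substitute for the paper's iterative reduction to the faces of the simplex.
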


\begin{proof}
  We discuss here  a sketch of our proof,  which is formally
  provided  in  the Appendix.   First,  we  notice that  the
  equality and  inequality constrained optimizations  of the
  R\'enyi entropy  are equivalent to a  set of equality-only
  constrained  optimizations  in smaller  dimensions.   This
  allows  us to  successfully apply  the method  of Lagrange
  multipliers to solve such a set of optimization problems.
\end{proof}

We remark that Lemma~\ref{lmm:maxminent}  is in closed form,
because  it involves  a minimization  over $n_a$  and $n_b$,
non-negative integers such that $n_a + n_b \le n$.  However,
Ref.~\cite{HT01} provides  additional insight (for  the case
of Von Neumann entropy) since  such a minimization is solved
therein.  It was shown in  Ref.~\cite{HT01} that for the Von
Neumann entropy  $S(\rho)$ the minimum over  $n_a$ and $n_b$
is attained by $n_a = \floor{P^{-1}}$  and $n_b = 1$, and by
$a_+$, $b_+$.  So  we have the following upper  bound on the
accessible information
\begin{align*}
  \max_\rho Q_A \left(  \rho, P \right) \le \ln n  + n_a a_+
  \ln a_+ + b_+ \ln b_+,
\end{align*}
where $n_a =  \floor{P^{-1}}$ and $a_+$, $b_+$  are as given
by  Lemma~\ref{lmm:maxminent}.   Moreover,   this  bound  is
tight, since the maximally mixed state belongs to the convex
hull of  the set of  states obtained by considering  all the
permutations of  the eigenvalues  $\vec\lambda$ as  given by
Lemma~\ref{lmm:maxminent},   for   some  fixed   basis   $\{
\ket{\lambda_k} \}$. By taking the  same structure as a POVM
$\{ \pi_y \}$ one also has
\begin{align*}
  W(\{ \pi_y \}) = \ln n + n_a a_+ \ln a_+ +  b_+ \ln b_+. 
\end{align*}

Hence we have our second main result
\begin{thm}[Upper bound under purity constraint]
  One has
  \begin{align*}
    S_W(P) \ge \max_\rho S_A(\rho, P) = \ln n +
    \floor{P^{-1}} a \ln a + b \ln b,
  \end{align*}
  where    $a     :=    (1     +    \sqrt{(P     \alpha    -
    1)/\floor{P^{-1}}})/\alpha$    and    $b   :=    (1    -
  \sqrt{\floor{P^{-1}}  (P   \alpha  -   1)})/\alpha$,  with
  $\alpha :=  \floor{P^{-1}} +  1$. The  quantity $\max_\rho
  S_A(\rho, P)$ is  attained by any ensemble  $\{ \rho_x \}$
  of $n$ states such that $\rho_x = a \ket{x}\!  \!\bra{x} +
  b \sum_{k \neq x} \ket{k}\!\!\bra{k}$ for any $x$, for any
  orthonormal basis $\{ \ket{k} \}$.
\end{thm}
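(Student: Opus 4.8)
The plan is to sandwich $\max_\rho S_A(\rho,P)$ between a Holevo-type upper bound and an explicit saturating ensemble, and then to read the same spectral structure as a POVM to obtain the matching lower bound on $S_W(P)$. First I would apply the Holevo inequality~\eqref{eq:accupper} to an arbitrary admissible ensemble and bound $S(\rho)\le\ln n$, obtaining $\max_\rho S_A(\rho,P)\le\ln n-\min_{\{\rho_x\}:P(\rho_x)\le P}\sum_x\Tr[\rho_x]\,S(\rho_x/\Tr[\rho_x])$. Since purity is scale-invariant, each normalized state $\rho_x/\Tr[\rho_x]$ is itself admissible, so the averaged entropy is no smaller than $\min_{P(\rho)\le P}S(\rho)$; relaxing the ensemble minimization to this single-state minimization yields $\max_\rho S_A(\rho,P)\le\ln n-\min_{P(\rho)\le P}S(\rho)$. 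Throughout I would keep track of tightness: \eqref{eq:accupper} is an equality precisely when the $\rho_x$ pairwise commute~\cite{Hol73,Rus02}, the bound $S\le\ln n$ is met at $\openone/n$, and the relaxation to a single state is tight whenever $\openone/n$ lies in the convex hull of the entropy minimizers.

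Next I would evaluate $\min_{P(\rho)\le P}S(\rho)$. Because $\{\rho:P(\rho)\le P\}$ is convex and $S$ is concave, the minimum sits on the boundary, so I may impose $P(\rho)=P$ and invoke Lemma~\ref{lmm:maxminent}, specialized to the Von Neumann case $\alpha\to1$. The Lemma reduces the problem to a spectrum with at most two distinct non-null eigenvalues of multiplicities $n_a,n_b$, with values $a_\pm,b_\pm$ fixed by~\eqref{eq:a_pm}--\eqref{eq:b_pm}, leaving a discrete minimization over $(n_a,n_b,\pm)$. I would then quote the result of Ref.~\cite{HT01}, which resolves this combinatorial minimization for the Von Neumann entropy, identifying the minimizer as $n_a=\floor{P^{-1}}$, $n_b=1$ on the $+$ branch, and producing the claimed value $\ln n+\floor{P^{-1}}a\ln a+b\ln b$.

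Finally, I would establish tightness and transfer the result to $S_W(P)$. The candidate ensemble $\rho_x=a\ket{x}\!\bra{x}+b\sum_{k\neq x}\ket{k}\!\bra{k}$ consists of states all diagonal in $\{\ket{k}\}$, hence pairwise commuting, and summing over the $n$ cyclic permutations gives $\openone/n$; this simultaneously saturates~\eqref{eq:accupper}, meets $S=\ln n$, and places $\openone/n$ in the convex hull of the minimizers, so every inequality above becomes an equality. Reading the very same spectral structure as a POVM with elements of purity $\le P$, I would compute its informational power via the duality~\eqref{eq:duality} and find it equal to the identical expression, whence $S_W(P)\ge\max_\rho S_A(\rho,P)$. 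I expect the main obstacle to be the discrete optimization over the multiplicities $(n_a,n_b)$: showing that $(\floor{P^{-1}},1)$ and the $+$ branch are genuinely optimal, rather than merely stationary, is the nontrivial step, and it is exactly the point at which I would lean on the topological analysis of Ref.~\cite{HT01}.
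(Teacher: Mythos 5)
Your proposal is correct and follows essentially the same route as the paper's proof: the Holevo bound combined with $S(\rho)\le\ln n$, relaxation to a single-state entropy minimization, passage to the boundary $P(\rho)=P$ by convexity/concavity, Lemma~\ref{lmm:maxminent} together with the combinatorial resolution $(n_a,n_b)=(\floor{P^{-1}},1)$ quoted from Ref.~\cite{HT01}, and the same commuting permutation ensemble for tightness and for the transfer to $S_W(P)$.
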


\section{Conclusion}

Known subentropy-like lower bounds on informational measures
(accessible information and  informational power) all assume
the optimization  to be restricted  to pure states  and POVM
elements.   In this  work,  we relaxed  this assumption,  by
regarding purity  as a resource, thus  recasting the problem
as  an  information-purity   tradeoff.   In  particular,  we
computed the minimum informational  power when the purity is
lower bounded,  and the maximum accessible  information when
the purity  is upper  bounded.  We  provided bounds  for the
other cases. We also discussed  the problem of the tightness
of  the   Jozsa-Robb-Wootters  lower  bound   on  accessible
information, giving new cases in which it is tight.

We conclude by discussing  some relevant open problems:
\begin{itemize}
\item  It is  still an  open problem  whether our  bounds in
  Theorem~\ref{thm:lower} and~\ref{thm:upper} are tight.
\item In  Lemma~\ref{lmm:maxsub} we  derived the  maximum of
  the subentropy under a  purity constraint; analogously, in
  Lemma~\ref{lmm:maxminent}  we  derived   the  maximum  and
  minimum  of  the  Von   Neumann  entropy  under  a  purity
  constraint.  It is still open  the problem of deriving the
  minimum of the subentropy  under a purity constraint.  One
  approach to this problem would involve extending the proof
  technique of Lemma 5 of Ref.~\cite{JM15}.
\item Here  we showed that the  Jozsa-Robb-Wootters bound in
  Eq.~\eqref{eq:acclower} is tight not  only for the Scrooge
  pure ensembles  and measurements, but also  when these are
  $\epsilon$-depolarized.  It  is still open the  problem of
  deriving  necessary  and  sufficient  conditions  for  the
  tightness of the subentropy lower bounds in general.
\item Closed  expressions for the quantities  $Q_A(\rho, P)$
  and $S_A(\rho,  P)$ are well-known  for the case $P  = 1$,
  for any $\rho$.  We  introduced closed expressions for any
  $P$,  in the  case $\rho  = \openone/n$.   Deriving closed
  expressions for $Q_A(\rho, P)$  and $S_A(\rho, P)$ for any
  $\rho$ and  $P$ is still  an open problem. In  this sense,
  there is still a trade-off in our current understanding of
  the relation between information and purity.
\end{itemize}

\setcounter{lmm}{0}

\section*{Appendix: proofs of the Lemmas}

Here         we        prove         Lemmas~\ref{lmm:maxsub}
and~\ref{lmm:maxminent}, that we recall for convenience.

\subsection{Maximum subentropy under purity constraint}

\begin{lmm}\label{lmm:maxsub-app}
  The  maximum of  the subentropy  $Q(\rho)$ over  any state
  $\rho$ with purity $P(\rho) = P$, for any $P$, is attained
  by      any     $\epsilon$-depolarized      pure     state
  $\mathcal{D}_\epsilon(\phi)$, with
  \begin{align}
    \label{eq:lambda2}
    \epsilon = \sqrt{\frac{n P - 1}{n-1}}.
  \end{align}
  Explicitly one has
  \begin{align*}
    & \max_{\substack{\rho \\  P(\rho) = P}} Q(\rho)  \\ = &
    \sum_{k=2}^n  {n \choose  k}  \frac{a^k \left(  \ln a  -
      \Sigma_k \right)}{(b-a)^{k-1}} -  \frac{b^n \left( \ln
      b - \Sigma_n \right)}{(b-a)^{n-1}} - \Sigma_n,
  \end{align*}
  where    $a$   and    $b$   are    the   eigenvalues    of
  $\mathcal{D}_\epsilon(\phi)$  with multiplicity  $n-1$ and
  $1$, respectively, that is
  \begin{align*}
    \begin{cases}
      a   :=   \frac{1-\epsilon}n,\\   b   :=   \epsilon   +
      \frac{1-\epsilon}n.
    \end{cases}
  \end{align*}
\end{lmm}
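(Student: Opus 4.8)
The plan is to split the problem into two independent tasks: identifying the optimal spectrum, which I would import from Ref.~\cite{JM15} after re-expressing the constraint, and then evaluating the subentropy on that spectrum, which is where the real work lies. For the first task, write $\rho = \sum_k \lambda_k \ketbra{\lambda_k}{\lambda_k}$ and use $\Tr[\rho] = \sum_k \lambda_k = 1$ to obtain $P(\rho) = \Tr[\rho^2] = \sum_k \lambda_k^2 = 1 - 2\sum_{i<j}\lambda_i\lambda_j$; hence fixing $P(\rho) = P$ is the same as fixing the second elementary symmetric polynomial $\sum_{i<j}\lambda_i\lambda_j = (1-P)/2$. Since $Q(\rho)$ depends on $\rho$ only through its spectrum, the maximization at fixed purity coincides spectrum-by-spectrum with the maximization of the subentropy at fixed second symmetric polynomial carried out in Ref.~\cite{JM15}, and I would invoke that analysis to conclude that the maximizer has a spectrum taking a single value $b$ of multiplicity one above an $(n-1)$-fold degenerate background $a$, with $b>a$; this is consistent with $Q$ being concave and maximized at $\openone/n$~\cite{DDJB14}, the degenerate ``one eigenvalue raised'' configuration being the fixed-purity spectrum closest to uniform. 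Such a spectrum is exactly that of a depolarized pure state $\mathcal{D}_\epsilon(\phi)$, and unitary invariance of $Q$ makes the choice of $\phi$ immaterial. Imposing normalization $(n-1)a + b = 1$ and the purity value $(n-1)a^2 + b^2 = P$ then fixes $a = (1-\epsilon)/n$, $b = \epsilon + (1-\epsilon)/n$ and $\epsilon = \sqrt{(nP-1)/(n-1)}$, reproducing Eq.~\eqref{eq:lambda2}. Only the value of $Q$ at this optimizer is then recomputed, correcting Ref.~\cite{JM15}.

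The crux is to evaluate $Q(\mathcal{D}_\epsilon(\phi))$, which cannot be read off Eq.~\eqref{eq:subentropy2} directly because its denominators $\prod_{j\neq k}(\lambda_k - \lambda_j)$ vanish on a degenerate spectrum. I would recognize the right-hand side of Eq.~\eqref{eq:subentropy2} as the divided difference of order $n-1$ of the function $z \mapsto -z^n \ln z$ on the nodes $\{\lambda_k\}$, and evaluate the confluent divided difference at the coalescing nodes $(\underbrace{a,\dots,a}_{n-1},b)$ through its contour representation $\tfrac{1}{2\pi i}\oint \frac{-z^n\ln z}{(z-a)^{n-1}(z-b)}\,\dif z$, with the contour enclosing $a$ and $b$ while avoiding the branch cut of $\ln z$. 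The simple pole at $z=b$ produces a term proportional to $b^n\ln b$, while the order-$(n-1)$ pole at $z=a$, treated by Leibniz' rule applied to $(-z^n\ln z)/(z-b)$, produces a sum carrying powers $a^k$, factors $(b-a)^{-(k-1)}$, and harmonic-number coefficients arising from the repeated differentiation of $z^n\ln z$; assembling the two residues and simplifying with the normalization $(n-1)a+b=1$ reorganizes those harmonic numbers into the family $\Sigma_k$ and yields the stated closed form, including the constant $-\Sigma_n$. Equivalently—and this is the organization in which the $\Sigma_k$ appear most naturally, through the antiderivatives $[\eta]^m = -\frac{x^{m+1}}{(m+1)!}(\ln x - \Sigma_{m+1})$—the same value follows from the integral representation of the subentropy of Ref.~\cite{JRW94} (paralleling the Jones-identity computation \eqref{eq:jones} performed in the main text for the depolarized Scrooge ensemble) and from the divided-difference formula of Ref.~\cite{SSJ17}. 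As a sanity check, for $n=2$ all the $\Sigma$-dependent pieces cancel by normalization and the formula collapses to $(a^2\ln a - b^2\ln b)/(b-a)$, the non-degenerate subentropy.

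The main obstacle is precisely this degenerate evaluation: organizing the Leibniz expansion at the $(n-1)$-fold node (or, equivalently, the nested antiderivatives in the integral route) so that the proliferating harmonic-number terms collapse into the single family $\Sigma_k$, and confirming that the constant $-\Sigma_n$ and the $\Sigma_n$-shift in the $b^n$-term emerge only after the normalization constraint is imposed. Everything preceding it—the reduction of the purity constraint to a constraint on the second symmetric polynomial, the appeal to Ref.~\cite{JM15} for the optimizer, and the determination of $\epsilon$—is short and elementary.
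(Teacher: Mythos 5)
Your proposal is correct and follows essentially the same route as the paper: the reduction of the purity constraint to the second elementary symmetric polynomial so as to import the optimizer from Ref.~\cite{JM15}, the determination of $\epsilon$, $a$, $b$ from normalization and purity, and the evaluation of the degenerate subentropy by divided differences (your contour-integral residue at the order-$(n-1)$ pole is exactly the $\frac{1}{(n-2)!}\partial_a^{n-2}$ computation the paper carries out via the multinomial theorem, and you also name the paper's alternative route through the integral representation of Ref.~\cite{JRW94}). You even correctly identify that the stated closed form only emerges after the normalization $(n-1)a+b=1$ absorbs the residual $\Sigma_n$ term, which is precisely the discrepancy the paper flags in Eq.~A24 of Ref.~\cite{JM15}.
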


\begin{proof}
  Upon setting $\rho =:  \sum_k \lambda_k \ket{\lambda_k} \!
  \bra{\lambda_k}$ and $e_2(\rho) :=  \sum_{k < j} \lambda_k
  \lambda_j$, it has been proven~\cite{JM15} (see Property 7
  and Lemma 5 therein) that
  \begin{align*}
    \max_{\substack{\rho  \\ e_2(\rho) = E}} Q(\rho)
  \end{align*}
  is attained by  $\rho = \mathcal{D}_\epsilon(\phi)$, where
  $\phi$   is  any   pure  state   and  $\epsilon$   is  the
  non-negative parameter such that the constraint $e_2(\rho)
  = E$ is satisfied.  Since  by explicit computation one has
  $P(\rho)  = 1  - 2  e_2(\rho)$, the  maximum under  purity
  constraint    is     also    attained    by     $\rho    =
  \mathcal{D}_\epsilon(\phi)$   and    $\epsilon$   is   the
  non-negative parameter such that the constraint $P(\rho) =
  P$ is satisfied.  By explicit computation one has
  \begin{align*}
    P(\rho) = \frac{(n-1)\epsilon^2 + 1}n,
  \end{align*}
  hence Eq.~\eqref{eq:lambda2} immediately follows.

  In   order  to   compute  $Q(\mathcal{D}_\epsilon(\phi))$,
  Eq.~\eqref{eq:subentropy2} is unpractical  as the spectrum
  of  $\mathcal{D}_\epsilon(\phi)$ is  degenerate.  Here  we
  compute $Q(\mathcal{D}_\epsilon(\phi))$ using the integral
  representation of $Q(\rho)$  derived in Ref.~\cite{JRW94}.
  One has $Q(\rho) = G(\rho) - \Sigma_n$, where
  \begin{align*}
    G(\rho) := -n \int  \dif x \left( \sum_{k=1}^n \lambda_k
    x_k  \right)  \ln   \left(  \sum_{k=1}^n  \lambda_k  x_k
    \right),
  \end{align*}
  and       $\rho       =       \sum_{k=1}^n       \lambda_k
  \ket{\lambda_k}\!\!\bra{\lambda_k}$.  The integral is over
  the simplex of probabilities given  by $x_k \ge 0$ for any
  $k$ and $\sum_k x_k = 1$, that is
  \begin{align*}
    \int   \dif   x   :=   N   \int_0^1   \dif   x_1   \dots
    \int_0^{1-x_1\dots -x_{n-2}} \dif x_{n-1},
  \end{align*}
  where $N$ denotes a  normalization factor that was derived
  e.g. in Eq.~(A1), Appendix 1, of Ref.~\cite{JRW94}.

  For  the sake  of completeness,  let us  here compute  $N$
  again by iteratively applying the integration formula
  \begin{align*}
    \int_0^\beta  \dif  x  \left(  \beta  -  x  \right)^m  =
    \frac{\beta^{m+1}}{m+1},
  \end{align*}
  easily  obtained  by replacing  $t  :=  \beta -  x$,  thus
  eventually obtaining
  \begin{align*}
    \int_0^1  \dif  x_1 \dots  \int_0^{1-x_1\dots  -x_{k-1}}
    \dif x_k = \frac1{k!}\;, \qquad \forall k,
  \end{align*}
  Hence  the condition  $\int  \dif  x =  1$  requires $N  =
  (n-1)!$.

  To  compute  $G(\mathcal{D}_\epsilon(\phi))$, notice  that
  for $\rho = \mathcal{D}_\epsilon(\phi)$ one has $\lambda_k
  = a$ for $1 \le k \le  n-1$ and $\lambda_n = b$, with $a =
  (1-\epsilon)/n$ and  $b = \epsilon +  (1-\epsilon)/n$.  We
  set $c := b-a$.

  We   compute    now   $G(\mathcal{D}_\epsilon(\phi))$   by
  iteratively applying the integration formulas
  \begin{align*}
    & \int_0^{1-\alpha} \dif x (b - c(\alpha + x))^{m-1} \ln
    (b - c(\alpha + x)) \\ = & \frac{ (b - c\alpha)^m \left(
      \ln (b - c\alpha) - \frac1m \right) - a^m\left(\ln a -
      \frac1m\right)}{mc},
  \end{align*}
  easily derived  by substituting $t  := b - c(\alpha  - x)$
  and by partial integration, and
  \begin{align*}
    \int_0^{1-\alpha}  \dif x  (b -  c(\alpha +  x))^{m-1} =
    \frac{ (b - c\alpha)^m - a^m}{mc},
  \end{align*}
  easily derived by  substituting $t := b -  c(\alpha - x)$,
  thus eventually obtaining
  \begin{align*}
    G(\mathcal{D}_\epsilon(\phi)) =  \sum_{k=2}^n {n \choose
      k}    \frac{a^k    \left(    \ln    a    -    \Sigma_k
      \right)}{(b-a)^{k-1}}  -  \frac{b^n  \left(  \ln  b  -
      \Sigma_n \right)}{(b-a)^{n-1}},
  \end{align*}
  or equivalently
  \begin{align}
    \label{eq:sub}
    &   Q(\mathcal{D}_\epsilon(\phi))    \nonumber\\   =   &
    \sum_{k=2}^n  {n \choose  k}  \frac{a^k \left(  \ln a  -
      \Sigma_k \right)}{(b-a)^{k-1}} -  \frac{b^n \left( \ln
      b - \Sigma_n \right)}{(b-a)^{n-1}} - \Sigma_n.
  \end{align}

  An         alternative        is         to        compute
  $Q(\mathcal{D}_\epsilon(\phi))$  by   divided  differences
  (see Eq.~11 of Ref.~\cite{SSJ17}), in which case one has
  \begin{align}
    \label{eq:divdiff}
    Q(\mathcal{D}_\epsilon(\phi))      =      \frac1{(n-2)!}
    \frac{\partial^{n-2}}{\partial a^{n-2}} \left( \frac{a^n
      \log a}{b-a} - \frac{b^n \log b}{b - a}\right).
  \end{align}
  One immediately has
  \begin{align}
    \label{eq:diff1}
    \frac{\partial^{n-2}}{\partial  a^{n-2}} \frac{b^n  \log
      b}{b - a} = (n-2)! \frac{b^n \log b}{(b-a)^{n-1}}.
  \end{align}
  By applying the multinomial theorem, one also has
  \begin{align*}
    & \frac{\partial^{n-2}}{\partial a^{n-2}} \frac{a^n \log
      a}{b  - a}  \\ =  & \sum_{k_1  + k_2  + k_3  = n  - 2}
    \frac{(n-2)!}{k_1!   k_2!   k_3!}   \times \\  &  \times
    \left(   \frac{\partial^{k_1}}{\partial   a^{k_1}}   a^n
    \right)  \left( \frac{\partial^{k_2}}{\partial  a^{k_2}}
    \log  a  \right)  \left(  \frac{\partial^{k_3}}{\partial
      a^{k_3}} (b-a)^{-1} \right).
  \end{align*}
  Since of course
  \begin{align*}
    \frac{\partial^k}{\partial     a^k}      a^n     &     =
    \frac{n!}{(n-k)!}  a^{n-k},\\ \frac{\partial^k}{\partial
      a^k} \log a & =
    \begin{cases}
      \log a & \textrm{ if } k = 0,\\ (-1)^{k-1}(k-1)!a^{-k}
      & \textrm{ if } k > 0,
    \end{cases}\\
    \frac{\partial^k}{\partial    a^k}   (b-a)^{-1}    &   =
    k!(b-a)^{-(k+1)},
  \end{align*}
  one has
  \begin{align}
    \label{eq:diff2}
    & \frac{\partial^{n-2}}{\partial a^{n-2}} \frac{a^n \log
      a}{b  -  a}  \nonumber\\   =  &  (n-2)!   \sum_{k=2}^n
    \frac{a^k}{(b-a)^{k-1}}     \times     \nonumber\\     &
    \times\left[ {n \choose  k} \log a -  \sum_{j = 1}^{n-2}
      {n \choose k + j} \frac{(-1)^{j}}j \right].
  \end{align}
  Combining   Eq.~\eqref{eq:divdiff},  Eq.~\eqref{eq:diff1},
  and Eq.~\eqref{eq:diff2} one finally has
  \begin{align}
    \label{eq:sub2}
    &   Q(\mathcal{D}_\epsilon(\phi))    \nonumber\\   =   &
    \sum_{k=2}^n  {n \choose  k}  \frac{a^k \left(  \ln a  -
      \Sigma_k \right)}{(b-a)^{k-1}} -  \frac{b^n \left( \ln
      b  -  \Sigma_n   \right)}{(b-a)^{n-1}}  -  \Sigma_n  +
    \nonumber\\ & + \Sigma_n\left[(n-1)a + b -1\right].
  \end{align}
  which differs from Eq.~\eqref{eq:sub}  only by a term that
  vanishes    due   to    unit-trace.    The    r.h.s.    of
  Eq.~\eqref{eq:sub2} should replace  the r.h.s. of Eq.~A24,
  Property~11, of Ref.~\cite{JM15}.
\end{proof}

\subsection{Extremal R\'enyi entropies under purity constraint}

\begin{lmm}
  Under constraints $\vec\lambda  \ge 0$, $|\vec\lambda|_1 =
  1$,   and  $|\vec\lambda|_2^2   =  P$,   the  extrema   of
  $H_\alpha(\vec\lambda)$  are attained  by a  $\vec\lambda$
  with at most two different non-null eigenvalues, that is
  \begin{align*}
    \vec\lambda =  \left( a_\pm, \dots, a_\pm,  b_\pm, \dots
    b_\pm, 0, \dots 0 \right),
  \end{align*}
  where  $(a_+, b_+)$  and  $(a_-, b_-)$  are  the only  two
  assignments   that  satisfy   the  constraints,   and  are
  explicitly given by
  \begin{align}
    \label{eq:a_pm}
    a_\pm & := \frac{1\pm\sqrt{\frac{n_b}{n_a}\left( P(n_a +
        n_b) -1 \right)}}{n_a + n_b},\\
    \label{eq:b_pm}
    b_\pm & := \frac{1\mp\sqrt{\frac{n_a}{n_b}\left( P(n_a +
        n_b) -1 \right)}}{n_a + n_b},
  \end{align}
  where $n_a$  and $n_b$ denote the  multiplicity of $a_\pm$
  and $b_\pm$, respectively. Explicitly one has
  \begin{align*}
    \min_{\substack{\vec\lambda\ge0\\|\vec\lambda|_1=1\\|\vec\lambda|_2^2
        = P}} H_\alpha\left(\vec\lambda \right) = \min_{n_a,
      n_b,  \pm} \left[  \frac{1}{1-\alpha}  \ln \left(  n_a
      a_\pm^\alpha     +     n_b    b_\pm^\alpha     \right)
      \right],\\ \max_{\substack{\vec\lambda\ge0\\|\vec\lambda|_1=1\\|\vec\lambda|_2^2
        = P}} H_\alpha\left(\vec\lambda \right) = \max_{n_a,
      n_b,  \pm} \left[  \frac{1}{1-\alpha}  \ln \left(  n_a
      a_\pm^\alpha + n_b b_\pm^\alpha \right) \right].
  \end{align*}
\end{lmm}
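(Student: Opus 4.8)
The plan is to reduce the entropy extremization to a purely algebraic problem and then apply Lagrange multipliers face by face. Since $H_\alpha(\vec\lambda) = (1-\alpha)^{-1}\ln\sum_k \lambda_k^\alpha$ is a strictly monotone function of $F(\vec\lambda) := \sum_k \lambda_k^\alpha$ (increasing or decreasing according to the sign of $1-\alpha$, via monotonicity of $\ln$), extremizing $H_\alpha$ over the feasible set is equivalent to extremizing $F$ over the same set. That feasible set is the intersection of the probability simplex $\{\vec\lambda \ge 0,\ |\vec\lambda|_1 = 1\}$ with the sphere $\{|\vec\lambda|_2^2 = P\}$. To handle the inequality constraints $\lambda_k \ge 0$, I would stratify by support: for each subset $S \subseteq \{1,\dots,n\}$ of nonzero coordinates, restrict to the relative interior of the corresponding face and treat the two remaining constraints $\sum_{k\in S}\lambda_k = 1$ and $\sum_{k\in S}\lambda_k^2 = P$ as equalities. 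Every global extremum lies in the relative interior of exactly one such face, so it suffices to solve the equality-only problem on each support and compare.

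On a fixed support $S$, stationarity of the Lagrangian $\sum_{k\in S}\lambda_k^\alpha - \mu(\sum \lambda_k - 1) - \nu(\sum\lambda_k^2 - P)$ yields, for every $k\in S$, the condition $\alpha\lambda_k^{\alpha-1} = \mu + 2\nu\lambda_k$. The crux of the argument is that this forces each $\lambda_k$ to take one of at most two distinct positive values. Indeed, the condition says precisely that every $\lambda_k$ is a positive root of $t \mapsto \alpha t^{\alpha-1} - (\mu + 2\nu t)$, i.e.\ an intersection point of the power function $t\mapsto \alpha t^{\alpha-1}$ with an affine function. On $(0,\infty)$ the second derivative of $\alpha t^{\alpha-1}$ is $\alpha(\alpha-1)(\alpha-2)t^{\alpha-3}$, which has constant sign, so this function is either strictly convex or strictly concave; subtracting an affine function preserves strict convexity/concavity, and a strictly convex (concave) function vanishes at most twice. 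Hence $|\{\lambda_k : k\in S\}| \le 2$, which together with the vanishing off-support coordinates gives exactly the claimed structure $\vec\lambda = (a,\dots,a,b,\dots,b,0,\dots,0)$. The degenerate exponents are dealt with separately: for $\alpha = 1$ the condition reads $\ln\lambda_k = (\mu-1) + 2\nu\lambda_k$, and $\ln$ is strictly concave, so the same two-root bound applies; for $\alpha = 2$ the objective $F = |\vec\lambda|_2^2 = P$ is constant and the statement is vacuous.

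Having fixed the two-valued structure with multiplicities $n_a$ and $n_b$, the values $a,b$ are determined by the two equality constraints $n_a a + n_b b = 1$ and $n_a a^2 + n_b b^2 = P$. Eliminating $b$ gives the quadratic $n_a(n_a+n_b)a^2 - 2n_a a + (1 - P n_b) = 0$, whose discriminant simplifies to $\tfrac{n_b}{n_a}\bigl(P(n_a+n_b)-1\bigr)$; its two roots are exactly the claimed expressions for $a_\pm$, and the partner values $b_\pm$ follow by symmetry under exchange of the roles of $(a,n_a)$ and $(b,n_b)$. Substituting back into $H_\alpha$ produces the value $\tfrac{1}{1-\alpha}\ln(n_a a_\pm^\alpha + n_b b_\pm^\alpha)$ associated with each triple $(n_a,n_b,\pm)$, and since the number of admissible supports and multiplicities is finite, the global minimum and maximum are obtained by minimizing and maximizing over this finite set, as stated. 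I expect the genuine difficulty to be entirely concentrated in the root-counting step, where one must argue uniformly in $\alpha$ that the power-versus-affine equation has at most two positive solutions; the remaining work, namely enumerating the feasible supports, checking the nonnegativity and reality conditions $a_\pm,b_\pm \ge 0$ and $P(n_a+n_b) \ge 1$, and solving the quadratic, is routine bookkeeping.
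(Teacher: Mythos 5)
Your proposal is correct and follows essentially the same route as the paper's proof: reduce to equality-constrained problems on the positivity faces, extremize $\sum_k\lambda_k^\alpha$ via Lagrange multipliers, deduce the two-valued structure from the convexity/concavity of the stationarity condition $\alpha\lambda_k^{\alpha-1}=\mu+2\nu\lambda_k$, and solve the resulting quadratic system for $a_\pm,b_\pm$. Your root-counting argument (power function minus affine function has at most two positive zeros) simply makes explicit what the paper compresses into the phrase ``have well-defined concavity,'' and your separate treatment of $\alpha=1$ matches the paper's.
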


\begin{proof}
  We consider the following optimization problems
  \begin{align}
    \label{eq:prog1}
    \min_{\substack{\{     \lambda_k      \}_{k=1}^n     \ge
        0\\|\vec\lambda|_1  =   1\\|\vec\lambda|_2^2  =  P}}
    H_\alpha(\vec\lambda),   \qquad    \textrm{and}   \qquad
    \max_{\substack{\{     \lambda_k      \}_{k=1}^n     \ge
        0\\|\vec\lambda|_1  =   1\\|\vec\lambda|_2^2  =  P}}
    H_\alpha(\vec\lambda).
  \end{align}
  We    iteratively     recast    these     equality-    and
  inequality-constrained  programs in  dimension $n$  into a
  set   of   equality-constrained    programs   in   smaller
  dimensions.   Indeed, the  extrema in  Eq.\eqref{eq:prog1}
  are obtained by
  \begin{align*}
    \min_{\substack{\{          \lambda_k         \}_{k=1}^n
        \\|\vec\lambda|_1   =  1\\|\vec\lambda|_2^2   =  P}}
    H_\alpha(\vec\lambda),   \qquad    \textrm{and}   \qquad
    \max_{\substack{\{ \lambda_k \}_{k=1}^n\\|\vec\lambda|_1
        = 1\\|\vec\lambda|_2^2 = P}} H_\alpha(\vec\lambda),
  \end{align*}
  and on the positivity faces, characterized by at least one
  entry equal  to zero  or equivalently by  dimension $n-1$.
  Then, the problem on the positivity faces is
  \begin{align*}
    \min_{\substack{\{    \lambda_k    \}_{k=1}^{n-1}    \ge
        0\\|\vec\lambda|_1  =   1\\|\vec\lambda|_2^2  =  P}}
    H_\alpha(\vec\lambda),   \qquad    \textrm{and}   \qquad
    \max_{\substack{\{    \lambda_k    \}_{k=1}^{n-1}    \ge
        0\\|\vec\lambda|_1  =   1\\|\vec\lambda|_2^2  =  P}}
    H_\alpha(\vec\lambda).
  \end{align*}
  By   iterating,  the   solutions   of   the  programs   in
  Eq.~\eqref{eq:prog1}  are the  solutions  of  this set  of
  programs
  \begin{align}
    \label{eq:prog2}
    \min_{\substack{\{ \lambda_k \}_{k=1}^m\\|\vec\lambda|_1
        = 1\\|\vec\lambda|_2^2  = P}} H_\alpha(\vec\lambda),
    \qquad \textrm{and}  \qquad \max_{\substack{\{ \lambda_k
        \}_{k=1}^m\\|\vec\lambda|_1 = 1\\|\vec\lambda|_2^2 =
        P}} H_\alpha(\vec\lambda),
  \end{align}
  for $1 \le m \le n$.

  We    now     proceed    solving    the     programs    in
  Eq.~\eqref{eq:prog2}.   Notice first  that the  extrema of
  $H_\alpha$ are attained in the  same points as the extrema
  of  $\sum_k  \lambda_k^\alpha$  since $\ln$  is  monotonic
  increasing.  By introducing Lagrange multipliers $\mu$ and
  $\nu$ one has
  \begin{align*}
    F_\alpha := \sum_k \lambda_k^\alpha  - \mu \left( \sum_k
    \lambda_k-1 \right) - \nu  \left( \sum_k \lambda_k^2 - P
    \right),
  \end{align*}
  which for $\alpha = 1$ becomes
  \begin{align*}
    &  F_1  \\ :=  &  -  \sum_k  \lambda_k \ln  \lambda_k  -
    \mu\left(\sum_k  \lambda_k-1\right)   -  \nu\left(\sum_k
    \lambda_k^2 - P\right).
  \end{align*}
  Thus one has
  \begin{align}
    \label{eq:lagrange1}
    \frac{\partial  F_\alpha}{\partial  \lambda_k} =  \alpha
    \lambda_k^{\alpha -1} - \mu -2\nu\lambda_k,
  \end{align}
  which for $\alpha = 1$ becomes
  \begin{align}
    \label{eq:lagrange2}
    \frac{\partial F_1}{\partial  \lambda_k} = -\ln\lambda_k
    - 1 - \mu -2\nu\lambda_k.
  \end{align}

  Since             Eq.~\eqref{eq:lagrange1}             and
  Eq.~\eqref{eq:lagrange2} depend on  $\lambda_k$ only (that
  would not be the case if we had not removed the positivity
  constraint) and  have well-defined concavity,  the optimal
  $\vec\lambda$ has at most  two different non-null entries,
  that we call  $a$ and $b$, and that do  not depend on $k$.
  Then,   the   constraints   $|\vec\lambda|_1  =   1$   and
  $|\vec\lambda|_2^2 =  P$ give the following  conditions on
  $a$ and $b$:
  \begin{align*}
    \begin{cases}
      n_a a + n_b b = 1,\\
      n_a a^2 + n_b b^2 = P.
    \end{cases}
  \end{align*}
  This systems admits two solutions, $(a_+, b_+)$ and $(a_-,
  b_-)$,    as    given     by    Eq.~\eqref{eq:a_pm}    and
  Eq.~\eqref{eq:b_pm}.  Due to the constraint $n_a + n_b \le
  n$, the  number of  such $\vec\lambda$'s is  finite, which
  provides a closed form solution of the optimization.
\end{proof}

\section*{Acknowledgments}

The authors are grateful to  Anna Szymusiak for pointing out
Refs.~\cite{Jon91a,  Jon91b}   and  to  Richard   Jozsa  for
insightful comments based on an  early version of this work.
M.~D.  acknowledges  support from the Ministry  of Education
and the Ministry of  Manpower (Singapore). F.~B acknowledges
support from the Japan Society  for the Promotion of Science
(JSPS) KAKENHI,  Grant No.  17K17796.  This work  was partly
supported by the program  for FRIAS-Nagoya IAR Joint Project
Group.  This  work is  dedicated  to  the memory  of  Graeme
Mitchison.

\begin{IEEEbiographynophoto}{Michele Dall'Arno}
  received  his   PhD  in   theoretical  physics   from  the
  University  of  Pavia,  Italy, in  2011.   After  post-doc
  positions in ICFO, Barcelona, and Nagoya, Japan, he joined
  the Centre  for Quantum Technologies,  National University
  of Singapore, where he is a post-doctoral researcher since
  2014.
\end{IEEEbiographynophoto}

\begin{IEEEbiographynophoto}{Francesco Buscemi}
  received his PhD in theoretical physics from the
  University of Pavia, Italy, in 2006. After post-doc
  positions in Tokyo, Japan, and Cambridge, UK, he joined
  Nagoya University in 2009, where he is a tenured
  associated professor in mathematical informatics since
  2014. In 2018 he was awarded the Birkhoff-von Neumann
  Prize by the International Quantum Structures Association.
\end{IEEEbiographynophoto}

\end{document}